\newcommand{\eat}[1]{}
\newcommand{\etc}{{\textrm{etc.}}\xspace}
\newcommand{\ie}{{\textrm{i.e.}}\xspace}
\newcommand{\etal}{{\textrm{et al.}}\xspace}
\newcommand{\Pproblem}{\texttt{P}\xspace}
\newcommand{\PTIME}{\texttt{PTIME}\xspace}
\newcommand{\NP}{\texttt{NP}\xspace}
\newcommand{\NC}{\texttt{NC}\xspace}
\newcommand{\DLOG}{\texttt{DLOGTIME}\xspace}
\newcommand{\PsTR}{\texttt{PsTR}\xspace}
\newcommand{\PsTE}{\texttt{PsTE}\xspace}
\newcommand{\PsT}{\texttt{PsT}\xspace}
\newcommand{\PsPL}{\texttt{PsPL}\xspace}
\newcommand{\PPL}{\texttt{PPL}\xspace}
\newcommand{\PT}{\texttt{PT}\xspace}
\newcommand{\RATM}{\texttt{RATM}\xspace}
\newcommand{\CP}{{\cal P}\xspace}
\newcommand{\Pre}{{\rm \Pi}\xspace}
\begin{document}
	
	\title{Sublinear-time Reductions for Big Data Computing}
	
	\author{Xiangyu Gao\inst{1,2} \and
		Jianzhong Li\inst{2, 1} \and
		Dongjing Miao\inst{1}\\
		\email{\{gaoxy, lijzh, miaodongjing\}@hit.edu.cn}
	}
	\institute{
		{Department of Computer Science and Technology, Harbin Institute of Technology, Harbin, China}\and
		{Faculty of Computer Science and Control Engineering, Shenzhen Institute of Advanced Technology Chinese Academy of Sciences, Shenzhen, China}
	}
	
	\titlerunning{Sublinear-time Reductions for Big Data Computing}
	\authorrunning{X. Gao, J. Li, et al.}
	
	\maketitle
	
	\begin{abstract}
		With the rapid popularization of big data, the dichotomy between tractable and intractable problems in big data computing has been shifted.
		Sublinear time, rather than polynomial time, has recently been regarded as the new standard of tractability in big data computing.
		This change brings the demand for new methodologies in computational complexity theory in the context of big data.
		Based on the prior work for sublinear-time complexity classes~\cite{DBLP:journals/tcs/GaoLML20}, this paper focuses on sublinear-time reductions specialized for problems in big data computing.
		First, the pseudo-sublinear-time reduction is proposed and the complexity classes \Pproblem and \PsT are proved to be closed under it.
		To establish \PsT-intractability for certain problems in \Pproblem, we find the first problem in $\Pproblem \setminus \PsT$.
		Using the pseudo-sublinear-time reduction, we prove that the nearest edge query is in \PsT but the algebraic equation root problem is not.
		Then, the pseudo-polylog-time reduction is introduced and the complexity class \PsPL is proved to be closed under it.
		The \PsT-completeness under it is regarded as an evidence that some problems can not be solved in polylogarithmic time after a polynomial-time preprocessing, unless \PsT = \PsPL.
		We prove that all \PsT-complete problems are also \Pproblem-complete, which gives a further direction for identifying \PsT-complete problems.
		\keywords{Big data computing, Sublinear-time tractability, Reduction techniques, Preprocessing}
	\end{abstract}

	\section{Introduction}

Traditionally, a problem is considered to be tractable if there exists a polynomial-time (\PTIME) algorithm for solving it.
However, \PTIME no more serves as a good yardstick for tractability in the context of big data, and sometimes even linear-time algorithms can be too slow in practice.
For example, a linear scan of a 1PB dataset with the fastest Solid State Drives on the market will take 34.7 hours~\cite{ssd.rank}.
Therefore, sublinear time is considered as the new standard of tractability in big data computing~\cite{lij2014dasfaa}.
This change has promoted the development of computational complexity theory specialized for problems in big data computing.

In the last few years, many complexity classes were proposed to formalize tractable problems in big data computing~\cite{DBLP:journals/pvldb/FanGN13, DBLP:journals/kais/YangWC18, DBLP:journals/tcs/GaoLML20}.
The first attempt was made by Fan \etal in 2013~\cite{DBLP:journals/pvldb/FanGN13}, which focuses on tractable boolean query classes with the help of preprocessing.
They defined a concept of $\sqcap$-tractability for boolean query classes.
A boolean query class is $\sqcap$-tractable if it can be processed in parallel polylogarithmic time (\NC) after a \PTIME preprocessing.
They defined a query complexity class $ {\sf\sqcap T^0_Q}$ to denote the set of $\sqcap$-tractable query classes.
To clarify the difference between ${\sf\sqcap T^0_Q}$ and \Pproblem, they proposed a form of generalized \NC reduction, referred as $F$-reduction $\le^{\sf\tt NC}_F$, and proved that $\sqcap {\sf T^0_Q}$ is closed under $F$-reduction.
They showed that $\NC \subseteq {\sf\sqcap T^0_Q} \subseteq \Pproblem$, but $\sqcap_{\sf T^0_Q} \ne \Pproblem$ unless $\Pproblem = \NC$.

Then, Yang \etal introduced a $\sqcap'$-tractability for short query classes, \ie the query length is bounded by a logarithmic function with respect to the data size~\cite{DBLP:journals/kais/YangWC18}.
On the basis of $\sqcap$-tractability theory, they placed a logarithmic-size restriction on the preprocessing result and relaxed the query execution time to polynomial.
The corresponding query complexity class was denoted as $\sqcap'{\sf T^0_Q}$, including the set of $\sqcap'$-tractable short query classes.
They proved that $F$-reduction is also compatible with $\sqcap'{\sf T^0_Q}$ and any $\sqcap'{\sf T^0_Q}$-complete query class under $F$-reduction is $P$-complete query class under \NC reduction.

A year ago, to completely describe the scope of sublinear-time tractable problems, the authors of this paper proposed two categories of sublinear-time complexity classes~\cite{DBLP:journals/tcs/GaoLML20}.
One kind characterizes the problems that are directly feasible in sublinear time, while the other describes the problems that are solvable in sublinear time after a \PTIME preprocessing.
However, we only showed that the polylogarithmic-time class \PPL is closed under \DLOG reduction and the sublinear-time class \PT is closed under linear-size \DLOG reduction, but left reductions for pseudo-sublinear-time complexity classes as a future work.

\noindent{\bf Open Question 1.} {\it What kind of reductions are appropriate for pseudo-sublinear-time tractable problems in big data computing?}

On the other, it is also important to identify the problems that are unsolvable in sublinear time.
Since, the new tractable standard in big data computing essentially dichotomizes problems in \Pproblem, it is significant to differentiate hardness of problems in \Pproblem.
The modern approach is to prove {\it conditional lower bounds} via {\it fine-grained reductions}~\cite{bringmann2019fine}.
Generally, a fine-grained reduction starts from a key problem such as SETH, 3SUM, APSP, \etc, which has a widely believed {\it conjecture} about its time complexity, and transfers the conjectured intractability to the reduced problem, yielding a conditional lower bounds on how fast the reduced problem can be solved.
The resulting area is referred as {\it fine-grained complexity theory}, and we refer to the surveys~\cite{DBLP:conf/iwpec/Williams15, williams2018some} for further reading.
However, to establish a problem is intractable in the context of big data, an unconditional lower bound, even rough, is also preferred.
Thus, the other goal of this paper is to overcome the following barrier.

\noindent{\bf Open Question 2.} {\it Is there a natural problem belonging to {\rm \Pproblem} but not to {\rm \PsT}?}

\subsection{Our Results}

The focus of this paper is mainly on pseudo-sublinear-time reductions specialized for problems in big data computing.
We reformulate the reduction used in~\cite{DBLP:journals/iandc/CadoliDLS02}, which was originally designed for complexity classes beyond \NP.
The general description of reductions proposed in this paper is illustrated in Figure~\ref{fig:red}.
We derive appropriate reductions for different complexity classes by limiting the computational power of functions used in it.

\begin{figure}[h]\vspace{-4ex}
	\centering
	\includegraphics[width=.65\linewidth]{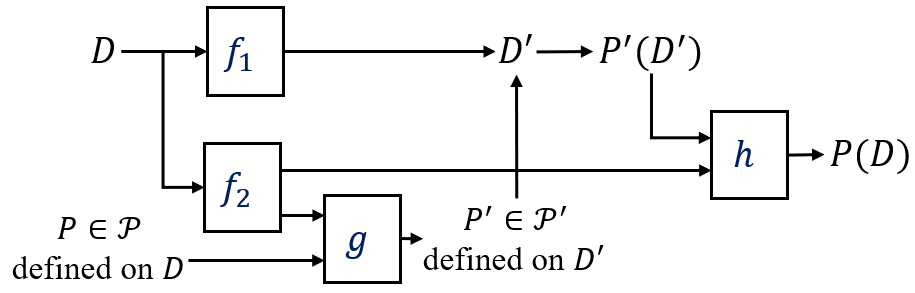}\vspace{-1ex}
	\caption{Illustration of reductions used in this paper.}\vspace{-4ex}
	\label{fig:red}
\end{figure}

We first introduce the pseudo-sublinear-time reduction, $\le^{\tt PsT}_m$ for problems in \PsT.
We prove that it is transitive and the complexity classes \Pproblem and \PsT are closed under $\le^{\tt PsT}_m$.
Due to the limitation of the fraction power function, we do not define a new \Pproblem-completeness under $\le^{\tt PsT}_m$ to include the problems in $\Pproblem\setminus \PsT$.
Instead, we prove a natural problem, the circuit value problem, can not be solved in sublinear time after a \PTIME preprocessing.
This also proves that $\PsT \subsetneq \Pproblem$.
After that, we reduce the algebraic equation root problem to the circuit value problem, which means the former also belongs to $\Pproblem \setminus \PsT$.
Moreover, we show the nearest neighbor problem is in \PsT by reducing it to the range successor query.

Then, we propose the notion of pseudo-polylog-time reduction, $\le^{\tt PsPL}_m$, and show that \PsPL is closed under $\le^{\tt PsPL}_m$.
We define the \PsT-completeness under $\le^{\tt PsPL}_m$, which can be treated as an evidence that certain problems are not solvable in polylogarithmic time after a \PTIME preprocessing unless \PsT = \PsPL.
We prove that all \PsT-complete problems are also \Pproblem-complete.
This specifies the range of possible \PsT-complete problems.

Moreover, we also extend L-reduction~\cite{DBLP:conf/coco/Crescenzi97} to pseudo-sublinear time and prove that it linearly preserve approximation ratio for pseudo-sublinear-time approximation algorithms.
Finally, we give a negative answer to the existence of complete problems in \PPL under \DLOG reduction.


\noindent{\bf Outline. }
The remainder of this paper is organized as follows.
Necessary preliminaries are stated in Section~\ref{sec:pre}.
The definitions and properties of pseudo-sublinear-time reduction and pseudo-polylog-time reduction are presented in Section~\ref{sec:ps-sub} and Section~\ref{sec:ps-polylog} respectively.
The pseudo-sublinear-time L-reduction is introduced in Section~\ref{sec:L}.
A negative results for the existence of complete problems in \PPL is shown in Section~\ref{sec:ppl}.
The paper is concluded in Section~\ref{sec:con}.
	\section{Preliminaries}\label{sec:pre}

In this section, we briefly review the sublinear-time complexity classes introduced in~\cite{DBLP:journals/tcs/GaoLML20} and the basic concepts of reductions.

We start with some notations.

\noindent{\bf Notations.}
To reflect the characteristics in big data computing, the input of a problem is partitioned into data part and problem part.
Thus, a decision problem $\CP$ can be considered as a binary relation such that for each $D$ and problem $P$ defined on $D$, $\langle D, P\rangle \in \CP$ if and only if $P(D)$ is true.
We say that a binary relation is in complexity class ${\cal C}$ if it is in ${\cal C}$ to decide whether a pair $\langle D, P\rangle \in \CP$.
Following the convention of complexity theory \cite{DBLP:books/daglib/0072413}, we assume a finite alphabet $\rm \Sigma$ of symbols to encode both of them.
The length of a string $x \in {\rm \Sigma}^*$ is denoted by $|x|$.
Given an integer $n$, let $\llcorner n \lrcorner$ denote the binary form of $n$.

\noindent{\bf Sublinear-time Complexity Classes.}
The computational model is crucial when describing sublinear-time computation procedures.
A random-access Turing machine (\RATM) $M$ is a $k$-tape Turing machine including a read-only input tape and $k - 1$ work tapes, referred as non-index tape.
And $M$ is additionally equipped with $k$ binary index tapes, one for each non-index tape.
$M$ has a special {\it random access} state which, when entered, moves the head of each non-index tape to the cell described by the respective index tape in one step.
Based on \RATM, a series of pure-sublinear-time complexity classes are proposed in \cite{DBLP:journals/tcs/GaoLML20} to include problems that are solvable in sublinear time.



\begin{definition}\label{def:ppl}
	\emph{
		The class \PPL consists of problems that can be solved by a \RATM in $O({\rm polylog}(n))$ time, where $n$ is the length of the input.
		And for each $i \ge 1$, $\PPL^i$ consists of problems that can be solved by a \RATM in $O(\log^i{n})$ time.
	}
\end{definition}

\begin{definition}\label{def:pt}
	\emph{
		The class \PT consists of problems that can be solved by a \RATM in $o(n)$ time, where $n$ is the length of the input.
	}
\end{definition}

Moreover, when the data part is fixed and known in advance, it makes sense to perform an off-line preprocessing on it to accelerate the subsequent processing of problem instances defined on it.
Hence, some pseudo-sublinear-time complexity classes are also defined to include the problems which are solvable in sublinear time after a \PTIME preprocessing on the data part.

\begin{definition}\label{def:pspl}
	\emph{
		A problem $\CP$ is in \PsPL if there exists a \PTIME preprocessing funciton $\Pre(\cdot)$ such that for any pair of strings $\langle D, P \rangle$ it holds that: $P(\Pre(D)) = P(D)$, and $P(\Pre(D))$ can be solved by a \RATM in $O({\rm polylog}(|D|))$ time.
	}
\end{definition}

\begin{definition}\label{def:pst}
	\emph{A problem $\CP$ is in \PsT if there exists a \PTIME preprocessing function $\Pre(\cdot)$ such that for any pair of strings $\langle D, P \rangle$ it holds that: $P(\Pre(D)) = P(D)$, and $P(\Pre(D))$ can be solved by a \RATM in $o(|D|)$ time.
	Moreover, a problem $\CP$ is in \PsTR (resp. \PsTE) if $\CP \in \PsT$ and the \PTIME preprocessing function $\Pre(\cdot)$ satisfies that for all big data $D$: $|\Pre(D)| < |D|$ (resp. $|\Pre(D)| \ge |D|$).
	}
\end{definition}



\noindent{\bf Reductions.}
In complexity theory, reductions are always used to both find efficient algorithms for problems, and to provide evidence that finding particularly efficient algorithms for some problems will likely be difficult\cite{DBLP:books/daglib/0066920, DBLP:books/daglib/0072413}.
Two main types of reductions are used in computational complexity theory, the many-one reduction and the Turing reduction. 
A problem $\CP_1$ is {\it Turing reducible} to a problem $\CP_2$, denoted as $\CP_1 \le_T \CP_2$ if there is an oracle machine to solve $\CP_1$ given an oracle for $\CP_2$.
That is, there is an algorithm for $\CP_1$ if it is available to a subroutine for solving $\CP_2$.
While, many-one reductions are a special case and stronger form of Turing reductions.
A decision problem $\CP_1$ is \textit{many-one reducible} to a decision problem $\CP_2$, denoted as $\CP_1 \le_m \CP_2$, if the oracle that is, the subroutine for $\CP_2$ can be only invoked once at the end, and the answer can not be modified.

Reductions define difficulty orders (from different aspects) among problems in a complexity class.
Hence, reductions are required to be transitive and easy to compute, relative to the complexity of typical problems in the class.
For example, when studying the complexity class \NP and harder classes such as the polynomial hierarchy, polynomial-time reductions are used, and when studying classes within \Pproblem such as \NC and {\sf\small NL}, log-space reductions are used.
We say a complexity class ${\cal C}$ is closed under a reduction if problem $\CP_1$ is reducible to another problem $\CP_2$ and if $\CP_2$ is in ${\cal C}$, then so must be $\CP_1$.
	\section{Pseudo-sublinear-time Reduction}\label{sec:ps-sub}

In this section, we introduce the notion of pseudo-sublinear-time reduction to tell whether a problem can be solved in sublinear time after a \PTIME preprocessing.

\begin{definition}\label{def:pseudo-sublinear-time}
	{\rm A decision problem $\CP_1$ is {\it pseudo-sublinear-time reducible} to a decision problem $\CP_2$, denoted as $\CP_1 \le^{\tt PsT}_m \CP_2$, if there is a triple $\langle f_1(\cdot), f_2(\cdot),  g(\cdot, \cdot)\rangle$, where $f_1(\cdot)$ and $f_2(\cdot)$ are linear-size \NC computable functions and $g(\cdot, \cdot)$ is a \PsT computable function, such that for any pair of strings $\langle D, P\rangle$ it holds that}
	\[\langle D, P \rangle \in \CP_1 \Leftrightarrow \langle f(D), g(D, P) \rangle \in \CP_2.\]
\end{definition}

Recall the general formalization of reductions specialized for problems in big data computing shown in Figure~\ref{fig:red}.
In contrast to traditional reductions such as polynomial-time reduction and log-space reduction, the pseudo-sublinear-time reduction is defined for the two parts of problems respectively.
Concretely speaking, (1) the data part of $\CP_2$ is obtained from the data part of $\CP_1$ using $f_1(\cdot)$, and (2) the problem part of $\CP_2$ is generated from the problem part of $\CP_1$ using $g(\cdot, \cdot)$ with some additional information of the data part of $\CP_1$ provided by $f_2(\cdot)$.
Intuitively, for different problems defined on the same data $D$, the computation of $f_2(D)$ can be regarded as an off-line process with a one-time cost.
Hence, when talking about the running time of $g(\cdot, \cdot)$, the running time of $f_2(\cdot)$ is excluded.
We first prove that $\le^{\tt PsT}_m$ is transitive.



\begin{theorem}\label{thm:ps-sub-trans}
	If $\CP_1 \le^{\tt PsT}_m \CP_2$ and $\CP_2 \le^{\tt PsT}_m \CP_3$, then also $\CP_1 \le^{\tt PsT}_m \CP_3$.
\end{theorem}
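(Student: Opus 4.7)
\medskip

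\noindent\textbf{Proof proposal.} The plan is to compose the two reductions component-wise, being careful to route all data-dependent preprocessing into the ``linear-size \NC'' side so that the composed problem-side function still enjoys a sublinear running time after a \PTIME preprocessing. Suppose $\CP_1 \le^{\tt PsT}_m \CP_2$ via $\langle f_1,f_2,g\rangle$ and $\CP_2 \le^{\tt PsT}_m \CP_3$ via $\langle f_1',f_2',g'\rangle$. I would propose the composed triple
\[
F_1(D) = f_1'\!\bigl(f_1(D)\bigr),\qquad
F_2(D) = \bigl\langle f_2(D),\, f_1(D),\, f_2'\!\bigl(f_1(D)\bigr)\bigr\rangle,\qquad
G(D,P) = g'\!\bigl(f_1(D),\, g(D,P)\bigr),
\]
and show that $\langle F_1,F_2,G\rangle$ witnesses $\CP_1 \le^{\tt PsT}_m \CP_3$.

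\medskip

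\noindent\textbf{Key steps.} First, I would verify that $F_1$ and $F_2$ are linear-size \NC computable. Because $|f_1(D)|=O(|D|)$, the output $f_1'(f_1(D))$ has size $O(|D|)$, and \NC is closed under composition (feeding the circuit for $f_1'$ the outputs of the circuit for $f_1$ produces a new polylog-depth, polynomial-size circuit); the same argument applies to $f_2'\circ f_1$, and $f_2(D)$ and $f_1(D)$ are already linear-size \NC by hypothesis, so the tuple $F_2(D)$ is linear-size \NC as well. Second, I would verify the correctness chain
\[
\langle D,P\rangle\in\CP_1 \;\Longleftrightarrow\; \langle f_1(D),g(D,P)\rangle\in\CP_2 \;\Longleftrightarrow\; \bigl\langle f_1'(f_1(D)),\,g'(f_1(D),g(D,P))\bigr\rangle\in\CP_3,
\]
which is exactly $\langle F_1(D),G(D,P)\rangle\in\CP_3$; the second equivalence instantiates the second reduction at the pair $\langle f_1(D),g(D,P)\rangle$.

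\medskip

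\noindent\textbf{Main obstacle.} The delicate point is arguing that $G$ is \PsT computable. The \PTIME preprocessing on the data part $D$ would be $F_2(D)$ together with any preprocessing internally required by $g$ and $g'$; since $f_1$, $f_2$, $f_2'$ are \NC computable (hence in \PTIME) and $g,g'$ are \PsT computable, the combined preprocessing is still a \PTIME function of $D$. Given $F_2(D)$, computing $g(D,P)$ takes $o(|D|)$ time by hypothesis, yielding an intermediate problem part $Q = g(D,P)$. The subtlety is that $g'$ is sublinear in the size of \emph{its} data part, which is $f_1(D)$; since $|f_1(D)|=O(|D|)$, the cached value $f_2'(f_1(D))$ in $F_2(D)$ lets us evaluate $g'(f_1(D),Q)$ in $o(|f_1(D)|)=o(|D|)$ time on the \RATM. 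Thus the total on-line cost is $o(|D|)+o(|D|)=o(|D|)$, confirming that $G$ is \PsT computable and establishing $\CP_1 \le^{\tt PsT}_m \CP_3$. The reader should be warned that if one instead tried to measure the running time of $g'$ in terms of $|Q|$, or tried to include $f_2'(f_1(D))$ in the on-line computation rather than the preprocessing, the bound would fail; pushing $f_2'\circ f_1$ into $F_2$ is precisely what makes the composition work.
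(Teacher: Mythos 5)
Your proposal is correct and takes essentially the same route as the paper's own proof: compose the data-side functions ($F_1 = f_1'\circ f_1$), bundle $f_2(D)$ and $f_2'(f_1(D))$ into the new offline function $F_2$, compose the problem-side functions, and bound the online time by a sum of sublinear terms, exactly as the paper does. The only difference is bookkeeping: the paper encodes its $f_2''(D)$ as the string $\llcorner|f_2(D)|\lrcorner\#f_2(D)\#f_2'(f_1(D))$ so that the \RATM can locate the boundary between the cached components in $O(\log|f_2(D)|)$ time, a parsing detail that your abstract tuple notation for $F_2(D)$ leaves implicit but which is easily supplied.
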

\begin{proof}
	From $\CP_1 \le^{\tt PsT}_m \CP_2$ and $\CP_2 \le^{\tt PsT}_m \CP_3$, it is known that there exist four linear-size \NC computable functions $f_1(\cdot)$, $f_2(\cdot)$, $f'_1(\cdot)$, and $f'_2(\cdot)$, and two \PsT computable functions $g(\cdot, \cdot), g'(\cdot,\cdot)$ such that for any pair of strings $\langle D_1, P_1\rangle$ and $\langle D_2, P_2\rangle$ it holds that
	\[\langle D_1, P_1 \rangle \in \CP_1 \Leftrightarrow \langle f_1(D_1), g(f_2(D_1), P_1) \rangle \in \CP_2,\]
	\[\langle D_2, P_2 \rangle \in \CP_2 \Leftrightarrow \langle f'_1(D_2), g'(f'_2(D_2), P_2) \rangle \in \CP_3.\]
	
	To show $\CP_1 \le^{\tt PsT}_m \CP_3$, we define three functions $f_1''(\cdot)$, $f''_2(\cdot)$ and $g''(\cdot)$ as follows.
	Let $f''_1(x) = f'_1(f_1(x))$, $f''_2(x) = \llcorner |f_2(x)|\lrcorner \# f_2(x)\#f'_2(f_1(x))$ and $g''(x, y) = g'(q, g(p, y))$ if $x = \llcorner |p| \lrcorner \# p \# q$, where \# is a special symbol that is not used anywhere else.
	Then we have
	\begin{equation*}
		\begin{aligned}
			\langle D_1, P_1 \rangle \in \CP_1 &\Leftrightarrow \langle f_1(D_1), g(f_2(D_1), P_1)\rangle \in \CP_2\\
			&\Leftrightarrow \langle f'_1(f_1(D_1)), g'(f'_2(f_1(D_1)), g(f_2(D_1), P_1)) \rangle \in \CP_3\\
			&\Leftrightarrow \langle f''_1(D_1), g''(\llcorner|f_2(D_1)|\lrcorner\#f_2(D_1)\#f'_2(f_1(D_1)), P_1) \rangle \in \CP_3\\
			&\Leftrightarrow \langle f''_1(D_1), g''(f''_2(D_1), P_1) \rangle \in \CP_3
		\end{aligned}
	\end{equation*}
	
	With the fact that the concentration and composition of two linear-size \NC computable function are still linear-size \NC computable functions, it is easy to verify that $f''_1(\cdot), f''_2(\cdot)$ are linear-size  \NC computable.
	As for $g''(\cdot, \cdot)$, the total time needed for computing $g''(f''_2(D), P)$ is bounded by $O(t_{g}(|f_2(D)|) + t_{g'}(|f'_2(f_1(D))|) + \log{|f_2(D)|}) = o(|D|)$.
	This completes the proof.\qed
\end{proof}

The pseudo-sublinear-time reduction is designed as a tool to prove that for some problems in \Pproblem, there is no algorithm can solve it in sublinear time after a \PTIME preprocessing.
Hence, in addition to time restriction, we also limit the output size of $f_1(\cdot)$ and $f_2(\cdot)$ to ensure that \PsT is closed under $\le^{\tt PsT}_m$.

\begin{theorem}\label{thm:ps-sub-closed}
	The complexity classes \Pproblem and \PsT is closed under $\le^{\tt PsT}_m$.
\end{theorem}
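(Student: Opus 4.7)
The plan is to unfold the definitions of $\le^{\tt PsT}_m$, $\Pproblem$, and $\PsT$, and verify closure by exhibiting, for a given problem $\CP_1 \le^{\tt PsT}_m \CP_2$ with $\CP_2$ in the target class, an algorithm (for $\Pproblem$) or a preprocessing/query pair (for $\PsT$) witnessing that $\CP_1$ lies in the same class. Two ingredients do all the work: (i) linear-size \NC computable functions are \PTIME computable and produce outputs of size $O(|D|)$; (ii) the \PsT computability of $g(\cdot,\cdot)$ means there is a \PTIME preprocessing on its first argument after which the second argument can be processed in $o(|\cdot|)$ time.

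\paragraph{Closure of \Pproblem.}
First I would handle the easy case. Given $\langle D,P\rangle$, the reduction algorithm computes $f_1(D)$ and $f_2(D)$ in polynomial time (since \NC $\subseteq$ \PTIME), then computes $g(f_2(D),P)$ in polynomial time (a \PsT algorithm consists of a \PTIME preprocessing followed by a sublinear-time query, hence runs in polynomial time overall), and finally queries $\CP_2$ on $\langle f_1(D), g(f_2(D),P)\rangle$ in polynomial time by hypothesis. Composing a constant number of polynomial-time steps yields a polynomial-time procedure for $\CP_1$.

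\paragraph{Closure of \PsT.}
The main case is to show that if $\CP_2 \in \PsT$, then $\CP_1 \in \PsT$. Let $\Pre_2$ be the \PTIME preprocessing witnessing $\CP_2 \in \PsT$, and let $\Pre_g$ be the \PTIME preprocessing witnessing $g \in \PsT$. I would define the preprocessing for $\CP_1$ on data $D$ as
\begin{equation*}
\Pre_1(D) \;=\; f_1(D)\,\#\,f_2(D)\,\#\,\Pre_2(f_1(D))\,\#\,\Pre_g(f_2(D)),
\end{equation*}
where $\#$ is a fresh separator. Because $f_1, f_2$ are linear-size \NC computable, both $f_1(D)$ and $f_2(D)$ are producible in polynomial time and have size $O(|D|)$; consequently $\Pre_2(f_1(D))$ and $\Pre_g(f_2(D))$ are also computable in time polynomial in $|D|$, so $\Pre_1$ is a \PTIME preprocessing. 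For the query phase, given $\Pre_1(D)$ and problem part $P$, I would (a) use $\Pre_g(f_2(D))$ to evaluate $Q := g(f_2(D),P)$ in time $o(|f_2(D)|) = o(|D|)$, and (b) use $\Pre_2(f_1(D))$ to decide $\langle f_1(D), Q\rangle \in \CP_2$ in time $o(|f_1(D)|) = o(|D|)$. The total query time is $o(|D|)$, and correctness follows from the equivalence $\langle D,P\rangle \in \CP_1 \Leftrightarrow \langle f_1(D), g(f_2(D),P)\rangle \in \CP_2$ supplied by the reduction.

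\paragraph{Where the care is needed.}
The only delicate point is the repeated use of the \emph{linear-size} hypothesis on $f_1$ and $f_2$: without it, $o(|f_i(D)|)$ could exceed $o(|D|)$ and the query phase would blow up, and the \PTIME preprocessings $\Pre_2, \Pre_g$ applied to $f_i(D)$ would not stay polynomial in $|D|$. This is precisely the technical reason the authors bake linear-size output into the definition of $\le^{\tt PsT}_m$, and my plan essentially reverse-engineers that choice to make the closure argument go through cleanly. Everything else is bookkeeping on the separator encoding, analogous to the concatenation used in the proof of Theorem~\ref{thm:ps-sub-trans}.
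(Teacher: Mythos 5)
Your proof is correct, and for the main \PsT case it follows the same strategy as the paper: define $\Pre_1(D)$ as a separator-delimited concatenation containing $\Pre_2(f_1(D))$ together with the $f_2$-information, then at query time evaluate $g$ and simulate the sublinear query for $\CP_2$, using the linear-size hypothesis on $f_1$, $f_2$ to keep every term $o(|D|)$. Two differences are worth recording. First, for the closure of \Pproblem you argue by direct composition ($\NC \subseteq \PTIME$, a \PsT computation is globally polynomial, so the whole chain is \PTIME), whereas the paper re-characterizes \Pproblem via a trivial preprocessing and repeats the \PsT-style construction; your version is more elementary and equally valid. Second, and more substantively, you store $\Pre_g(f_2(D))$ --- the output of $g$'s own \PTIME preprocessing --- inside $\Pre_1(D)$, while the paper stores the raw string $f_2(D)$ and then asserts that $g(f_2(D_1), P_1)$ is computable in $o(|f_2(D_1)|)$ time from it. Since \PsT computability of $g$ (in the sense of Definition~\ref{def:pst}) only guarantees sublinear evaluation \emph{after} a preprocessing of the first argument, your explicit inclusion of $\Pre_g(f_2(D))$ is the more faithful reading and quietly patches an imprecision in the paper's own write-up.

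One caveat: as literally written, your encoding $f_1(D)\#f_2(D)\#\Pre_2(f_1(D))\#\Pre_g(f_2(D))$ omits the binary length prefixes $\llcorner \cdot \lrcorner$; without them a \RATM cannot locate the separators, and hence the offsets of the components, in sublinear time, since that requires a linear scan. You flag this as bookkeeping and point to the encoding used in Theorem~\ref{thm:ps-sub-trans}, which indeed supplies the fix (prefix each variable-length block with its length in binary, as the paper does with $\llcorner|f_2(x)|\lrcorner$, so that offsets are recoverable in $O(\log|f_2(D)|)$ time), so this is a deferred detail rather than a gap --- but any final write-up must include it, and also drop the redundant $f_1(D)$ block, which is never consulted at query time.
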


\begin{proof}
	To show \PsT is closed under $\le^{\tt PsT}_m$, we claim that for all $\CP_1$ and $\CP_2$ if $\CP_1 \le^{\tt PsT}_m \CP_2$ and $\CP_2 \in \PsT$, then $\CP_1 \in \PsT$.
	From $\CP_1 \le^{\tt PsT}_m \CP_2$, we know that there exist two linear-size \NC computable functions $f_1(\cdot)$ and $f_2(\cdot)$, and a \PsT computable function $g(\cdot, \cdot)$ such that for any pair of strings $\langle D_1, P_1 \rangle$ it holds that
	\[\langle D_1, P_1 \rangle \in \CP_1 \Leftrightarrow \langle f_1(D_1), g(f_2(D_1), P_1) \rangle \in \CP_2.\]
	
	Furthermore, since $\CP_2 \in \PsT$, there exists a \PTIME preprocessing function $\Pre_2(\cdot)$ such that for any pair of strings $\langle D_2, P_2\rangle$ it holds that: $P_2(\Pre_2(D_2)) = P_2(D_2)$, and $P_2(\Pre_2(D_2))$ can be solved by a \RATM $M_2$ in $o(|D_2|)$ time.
	Therefore, for any pair of strings $\langle D_1, P_1\rangle$ we have,
	\[P_1(D_1) = g(f_2(D_1), P_1)(f_1(D_1)) = g(f_2(D_1), P_1)(\Pre_2(f_1(D_1))).\]
	
	To show $\CP_1 \in \PsT$, we define a \PTIME preprocessing function $\Pre_1(\cdot)$ for $\CP_1$ such that $P_1(D_1) = P_1(\Pre_1(D_1))$ and a \RATM for $P_1(\Pre_1(D_1))$ running in sublinear time with respect to $|D_1|$.
	First, let $\Pre_1(x) = \llcorner \left|f_2(x)\right| \lrcorner\#f_2(x)\#\Pre_2(f_1(x))$, where \# is a special symbol that is not used anywhere else.
	It is remarkable to see that $\llcorner \left|f_2(x)\right| \lrcorner$ is used to help us to distinguish the two parts of the input in logarithmic time.
	Then we construct a \RATM $M_1$ by appending a pre-procedure to $M_2$.
	More concretely, with input $\Pre_1(D_1)$ and $P_1$, $M_1$ first copies $\llcorner |f_2(D_1)| \lrcorner$ to its work tap and computes the index of the second \#, which equals to $|f_2(D_1)| + |\llcorner |f_2(D_1)| \lrcorner| + 1$.
	Then $M_1$ generates $g(f_2(D_1), P_1)$ according to the information between the two \#s.
	Finally, $M_1$ simulates the computation of $M_2$ with input $\Pre_2(f_1(D_1))$, the information behind the second \#, and $g(f_2(D_1), P_1)$, then outputs the result returned by $M_2$.
	
	Since $\Pre_2(\cdot)$ is \PTIME computable, both $f_1(\cdot)$ and $f_2(\cdot)$ are $\NC$ computable, and the length of a string is logarithmic time computable, the running time of $\Pre_1(\cdot)$ can bounded by a polynomial.
	The time required by computing the index of the second \# is $t_I = O(\log{|f_2(D_1)|})$
	And, $g(\cdot, \cdot)$ is computable in $o(|f_2(D_1)|)$ time.
	As both $f_1(\cdot)$ and $f_2(\cdot)$ are linear-size functions, the running time of $M_1$ is bounded by $t_I + t_g + t_{M_2} = O(\log{|f_2(D_1)|} + o(|f_2(D_1)|) + o(|f_1(D_1)|) = o(|D_1|)$.
	Thus, $\CP_1 \in \PsT$.
	
	As for \Pproblem, we can consider another characterization for problems in \Pproblem.
	That is, there is a \PTIME preprocessing function $\Pre(\cdot)$ and a \PTIME \RATM $M$ such that for any pair of strings $\langle D, P\rangle$ it holds that: $P(\Pre(D)) = P(D)$ and $P(\Pre(D))$ can be solved by $M$.
	Then with similar construction as above, it is easy to prove that \Pproblem is closed under $\le^{\tt PsT}_m$.\qed
\end{proof}

The reduction defines a partial order of computational difficulty of problems in a complexity class, and the complete problems are regarded as the hardest ones.
Analogous to \NP-completeness, the \Pproblem-complete problems under $\le^{\tt PsT}_m$ can be considered as intractable problems in $\Pproblem\setminus \PsT$ if $\Pproblem \ne \PsT$.
However, we don't think it is appropriate to define that new \Pproblem-completeness for the following reason.
According to the proofs of the first complete problem of \Pproblem (under \NC reduction) and \NP, we notice that the size of the resulted instance is always related to the running time of the Turing machine for the origin problem.
Hence, the linear-size restriction of $f_1(\cdot)$ and $f_2(\cdot)$ may be too strict to hold.
Nevertheless, we succeeded to find a natural problem in $\Pproblem \setminus \PsT$.
Then, based on it, we can establish the unconditional pseudo-sublinear-time intractability for problems in $\Pproblem\setminus \PsT$.\smallskip

\noindent {\bf Circuit Value Problem (CVP):}\vspace{-1.5ex}
\begin{itemize}
	\item[$\circ$]{\bf Given:} A Boolean circuit $\alpha$, and inputs $x_1, \cdots, x_d$.
	\item[$\circ$]{\bf Problem:} Is the output of $\alpha$ is {\tt TRUE} on inputs $x_1, \cdots, x_d$? 
\end{itemize}

\begin{theorem}\label{thm:cvp}
	There is no algorithm can preprocess a circuit $\alpha$ in polynomial time and subsequently answer whether the output of $\alpha$ on the input $x_1, \cdots, x_d$ is {\tt TRUE} in sublinear time.
	That is, {\rm CVP} $\in$ {\rm \Pproblem} $\setminus$ {\rm \PsT}.
\end{theorem}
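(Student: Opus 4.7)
The plan is to split the claim into the two containments CVP $\in$ \Pproblem and CVP $\notin$ \PsT and handle them separately. The first is routine: a topological gate-by-gate pass over $\alpha$ evaluates the circuit on $\bar{x}$ in time linear in $|\alpha|$, so CVP $\in$ \Pproblem regardless of any preprocessing, and I would dispatch this in a single sentence.

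For the main direction CVP $\notin$ \PsT, the strategy is a direct adversary argument against the simplest possible circuit family. I would take $\alpha_n$ to be the $n$-input balanced AND-tree, fix a circuit encoding under which $|\alpha_n|$ has the same order of magnitude as the number of input bits $n$, and suppose toward a contradiction that some \PTIME preprocessing $\Pre(\cdot)$ together with a \RATM $M$ decides CVP in time $t(|\alpha|) = o(|\alpha|)$.

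The heart of the argument is then the following. Choose $n$ large enough that $t(|\alpha_n|) < n$ and run $M$ on $\langle \Pre(\alpha_n), \bar{1}\rangle$, where $\bar{1}$ is the all-ones vector of length $n$; correctness forces the output {\tt TRUE}. Because a \RATM reads at most one input cell per step, the execution touches strictly fewer than $n$ cells of $\bar{x}$, so some index $i \in \{1,\ldots,n\}$ is never queried during this run. Form $\bar{x}'$ by flipping $x_i$ from $1$ to $0$; since every cell $M$ actually reads still carries the same value under $\bar{x}'$ and $\Pre(\alpha_n)$ is unchanged, the trace of $M$ on $\langle \Pre(\alpha_n), \bar{x}'\rangle$ coincides step-for-step with its trace on $\langle \Pre(\alpha_n), \bar{1}\rangle$ and $M$ returns the same verdict on both. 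But $\alpha_n(\bar{1}) = 1$ and $\alpha_n(\bar{x}') = 0$, a contradiction, so no such $\Pre$ and $M$ can exist.

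The one delicate step I expect to have to handle is aligning the circuit encoding with the bound $o(|\alpha_n|)$: the description length $|\alpha_n|$ must grow at the same order as $n$, so that sublinear query time truly sits below the amount of $\bar{x}$ the algorithm is forced to inspect. A compact fixed-topology encoding of the AND-tree delivers this alignment, and the rest of the argument reduces to the elementary \RATM observation --- recalled in Section~\ref{sec:pre} --- that at most $t$ distinct cells can be read in $t$ steps.
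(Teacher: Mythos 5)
Your proposal is correct in substance, and it takes a genuinely different route from the paper. The paper's proof is a counting argument aimed at the \emph{preprocessing}: it considers the $2^{2^d}$ Boolean functions on $d$ variables written as full-DNF circuits, notes that a correct preprocessing $\Pre(\cdot)$ must assign distinct outputs to circuits computing distinct functions, and concludes by pigeonhole that some $|\Pre(\alpha)|\ge 2^d$, which is then played off against the polynomial bound on $|\Pre(\alpha)|$. Your proof is instead an adversary argument aimed at the \emph{query phase}: on the AND-tree family, a deterministic \RATM running in $o(|\alpha_n|)<n$ steps must leave some bit of $\bar{x}$ unread, and flipping that bit changes $\alpha_n(\bar{x})$ but not the computation. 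The two arguments consume different resources, and this matters. The paper's argument never uses the sublinear time bound at all, only the polynomial length bound on $\Pre(\alpha)$; but precisely there it is fragile, since the circuits the pigeonhole points to are full DNFs of size $\Theta(d2^d)$, for which the needed inequality $2^d>{\rm poly}(|\alpha|)$ (equivalently $d=\omega(\log|\alpha|)$) cannot hold. Your argument, conversely, uses nothing about $\Pre(\cdot)$ --- it may have arbitrary output length and need not even be computable --- so you actually prove a stronger statement: no preprocessing of any kind permits $o(|\alpha|)$ query time. The one real assumption is the encoding issue you flag yourself: under a standard pointer-based encoding an $n$-input circuit has size $\Theta(n\log n)$ because each variable reference costs $\Theta(\log n)$ bits, and then $o(|\alpha_n|)$ steps may already suffice to read all $n$ input bits, so the adversary argument collapses. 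You must therefore fix an encoding under which the AND-tree on $n$ inputs (or some family with $\Omega(|\alpha|)$ relevant inputs) has description length $O(n)$, and your theorem is relative to that choice. Since the paper leaves the encoding of CVP unspecified and its own proof is likewise tied to a particular representation (full DNFs), this is an honestly-flagged modeling assumption rather than a gap, but it deserves to be stated as a hypothesis of the theorem you prove.
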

\begin{proof}
	As stated in~\cite{HOLDSWORTH200243}, given $d$ variables, there are ${2^2}^d$ distinct boolean functions can be constructed in total.
	And each of them can be written as a full disjunctive normal from its truth table, which can easily represented by a circuit.
	Suppose CVP belongs to \PsT, \ie, there is a \PTIME preprocessing function $\Pre(\cdot)$ on $\alpha$ such that for all interpretations of $x_1, \cdots, x_d$, $\alpha(x_1, \cdots, x_d) = \Pre(\alpha)(x_1, \cdots, x_d)$ can be computed in sublinear time with respect to $|\alpha|$.
	Consider any two distinct circuits $\alpha_1$ and $\alpha_2$ with the same variables $x_1, \cdots, x_d$.
	There exists an interpretation for $x_1, \cdots, x_d$ such that $\alpha_1(x_1, \cdots, x_d) \ne \alpha_2(x_1, \cdots, x_d)$.
	Consequently, $\Pre(\alpha_1) \ne \Pre(\alpha_2)$.
	Therefore, all these circuits have different outputs of the function $\Pre(\cdot)$.
	Since there are totally ${2^2}^d$ different circuits, then there should be at least ${2^2}^d$ different outputs of $\Pre(\cdot)$ on all these circuits.
	To denote these, the length of $\Pre(\alpha)$ should be at least $\log{2^2}^d = 2^d$.
	This contradicts to $\Pre(\cdot)$ is \PTIME computable by choosing $d = \omega(\log{|\alpha|})$.
	\qed
\end{proof}

\noindent {\bf Algebraic Equation Root Problem(AERP):}\vspace{-1.5ex}
\begin{itemize}
	\item[$\circ$]{\bf Given:} An algebraic equation $P$ with variables $x_1,\cdots, x_d$, and an assignment $A = (a_1, \cdots, a_d)$.
	\item[$\circ$]{\bf Problem:} Is $A$ a root of $P$?
\end{itemize}

\begin{theorem}
	${\rm CVP} \le^{\tt PsT}_m {\rm AERP}$.
\end{theorem}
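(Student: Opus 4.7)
The plan is to execute the classical arithmetisation of Boolean circuits, adapted to fit the format of a pseudo-sublinear-time reduction. The data part of CVP is the circuit $\alpha$, and I would let $f_1(\alpha)$ be the algebraic equation $Q(x_1,\ldots,x_d) - 1 = 0$, where $Q$ is obtained from $\alpha$ by rewriting each gate locally: $a \wedge b \mapsto a\cdot b$, $a \vee b \mapsto a + b - a\cdot b$, and $\neg a \mapsto 1 - a$. The result is stored as an arithmetic expression DAG that preserves the structure of $\alpha$, so $|f_1(\alpha)| = O(|\alpha|)$; since every gate rewrite is a strictly local, constant-size substitution applied independently at each node, $f_1$ is realised by a constant-depth linear-size \NC circuit. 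The companion function $f_2(\alpha)$ can be taken as a trivial constant string, which is vacuously linear-size \NC computable, since $g$ does not need any information about $\alpha$.

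Next, I would define $g(f_2(\alpha),(x_1,\ldots,x_d))$ to be the symbol-by-symbol translation sending \texttt{TRUE} to $1$ and \texttt{FALSE} to $0$, producing an assignment $(a_1,\ldots,a_d)$ over the same variables. Because this is a scan of the problem part alone with no real access needed to $\alpha$, $g$ is trivially \PsT computable: no preprocessing is required and the running time depends only on the problem-part length. Correctness then reduces to a routine structural induction on gate depth: for all Boolean inputs, the arithmetic expression $Q$ evaluates to the Boolean value of $\alpha$ under the correspondence $\texttt{TRUE}\leftrightarrow 1$, $\texttt{FALSE}\leftrightarrow 0$. Hence $\langle \alpha, (x_1,\ldots,x_d)\rangle \in \mathrm{CVP}$ iff $Q(a_1,\ldots,a_d) = 1$ iff $(a_1,\ldots,a_d)$ is a root of $Q - 1 = 0$, which is exactly membership in $\mathrm{AERP}$ for the reduced instance.

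The main point of care, and essentially the only obstacle, is the representation convention for algebraic equations: the argument crucially uses the fact that the equation is stored as an expression DAG (or tree) rather than expanded into a sum of monomials, because repeated application of the OR rewrite to nested OR-gates can cause super-polynomial blow-up under the monomial representation, violating the linear-size constraint on $f_1$. I would therefore fix up-front that AERP is defined over the natural expression-tree representation; once this convention is set, the linear-size, \NC, and \PsT requirements of Definition~\ref{def:pseudo-sublinear-time} are all immediate, and the reduction $\mathrm{CVP} \le^{\tt PsT}_m \mathrm{AERP}$ follows.
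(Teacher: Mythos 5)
Your construction is essentially the paper's: the same arithmetisation (AND $\mapsto u \times v$, OR $\mapsto u + v - u\times v$, NOT $\mapsto 1 - u$, output gate forced to equal $1$), the same expression-level representation keeping $|f_1(\alpha)|$ linear, and the same bit-by-bit translation of the Boolean inputs into $\{0,1\}$ for the assignment. Your observation about expression DAGs versus monomial expansion is correct and is implicit in the paper's $7|\alpha|$ size bound.

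There is, however, one genuine gap: the claim that $g$ is ``trivially \PsT computable'' because ``the running time depends only on the problem-part length.'' That is not the notion the surrounding machinery needs. In the proofs of Theorem~\ref{thm:ps-sub-trans} and Theorem~\ref{thm:ps-sub-closed}, the time to compute $g(f_2(D), P)$ is charged against the \emph{data} size: it must be $o(|f_2(D)|)$, hence $o(|D|)$, because the solving time in Definition~\ref{def:pst} is measured relative to $|D|$, not $|P|$. Your $g$ scans the whole assignment and so takes $\Theta(d)$ time, and nothing in CVP forbids $d = \Theta(|\alpha|)$ (a circuit can consist mostly of input gates); for such instances the triple fails Definition~\ref{def:pseudo-sublinear-time}, and the intended corollary that AERP lies in $\Pproblem \setminus \PsT$ would no longer follow from closure. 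The paper closes exactly this hole by restricting to instances with $d = o(|\alpha|)$, which is harmless because the instances witnessing ${\rm CVP} \notin \PsT$ in Theorem~\ref{thm:cvp} only need $d = \omega(\log|\alpha|)$. Your choice of a constant-length $f_2(\alpha)$ makes matters worse under the strict reading of the closure proof, where $g$'s time budget is $o(|f_2(D)|)$: a constant-size first argument leaves no budget at all. The fix is small: keep your construction, add the explicit hypothesis $d = o(|\alpha|)$, and let $f_2$ output a string of length $\Theta(|\alpha|)$ (e.g.\ the translated circuit itself, as the paper does) rather than a constant.
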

\begin{proof}
	Assume we are given a boolean circuit $\alpha$, we define a transformation of $\alpha$ into an equation $P$ such that the output of $\alpha$ is {\tt\small TRUE} on inputs $x_1, \cdots, x_d$ if and only if $A = (x_1, \cdots, x_n)$ is a root of $P$.
	First, let $f_1(\cdot)$, $f_2(\cdot)$ express the following procedure.
	Traverse $\alpha$ in a topological order: (1) if an {\tt\small AND} gate with input $u$, $v$ is met, represent it by $u \times v$, (2) if an {\tt\small OR} gate with input $u$, $v$ is met, represent it by $u + v - u \times v$, (3) if a {\tt\small NOT} gate with input $u$, is met, represent it by $1 - u$, (4) if the final output gate $z$ is met, represent it by $z = 1$.
	Then, for each $x_i$ if the input $x_i$ is {\tt\small TRUE}, $g(f_2(\alpha), x_i) = 1$, otherwise, $g(f_2(\alpha), x_i) = 0$. 
	
	It is easy to see that the output of $\alpha$ is {\tt\small TRUE} on inputs $x_1, \cdots, x_d$ if and only if $A = (g(f_2(\alpha), x_1), \cdots, g(f_2(\alpha), x_n))$ is a root of $f_1(\alpha)$.
	And as stated in~\cite{cook1985taxonomy}, the topological traversal of a DAG can be computed in \NC.
	Moreover, both $|f_1(\alpha)|$ and $|f_2(\alpha)|$ are less than $7|\alpha|$.
	And let $d = o(|\alpha|)$, $g(\cdot, \cdot)$ is \PsT computable.\qed
\end{proof}

\begin{corollary}
	There is no algorithm can preprocess an algebraic equation $P$ in  polynomial time and subsequently answer whether a given assignment $A$ is a root of $P$ in sublinear time.
\end{corollary}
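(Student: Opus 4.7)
The plan is to derive the corollary as an immediate consequence of the three preceding results, namely the hardness result that $\mathrm{CVP} \in \Pproblem \setminus \PsT$ (Theorem~\ref{thm:cvp}), the reduction $\mathrm{CVP} \le^{\tt PsT}_m \mathrm{AERP}$ just established, and the closure of \PsT under $\le^{\tt PsT}_m$ (Theorem~\ref{thm:ps-sub-closed}). Concretely, the corollary is equivalent to asserting $\mathrm{AERP} \notin \PsT$, so I would argue by contraposition.

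First, I would restate the claim in the form $\mathrm{AERP} \notin \PsT$ and suppose, for the sake of contradiction, that $\mathrm{AERP} \in \PsT$, \ie that there exists a \PTIME preprocessing function $\Pre(\cdot)$ on an algebraic equation $P$ together with a \RATM that, for any assignment $A$, decides whether $A$ is a root of $P$ in $o(|P|)$ time when given $\Pre(P)$. Next, I would invoke the reduction $\mathrm{CVP} \le^{\tt PsT}_m \mathrm{AERP}$ together with the closure property in Theorem~\ref{thm:ps-sub-closed}, which together immediately yield $\mathrm{CVP} \in \PsT$. Finally, this directly contradicts Theorem~\ref{thm:cvp}, which asserts $\mathrm{CVP} \in \Pproblem \setminus \PsT$, and the contradiction establishes the corollary.

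Since both of the key ingredients, namely the reduction and the closure theorem, have already been proved in full, there is essentially no additional technical content required; the only thing to verify is that the linear-size bounds $|f_1(\alpha)|, |f_2(\alpha)| < 7|\alpha|$ and the \PsT-computability of $g(\cdot,\cdot)$ used in the reduction proof are genuinely strong enough to trigger the hypothesis of Theorem~\ref{thm:ps-sub-closed}, which they are by construction. The only potential obstacle, though a minor one, would be if one wanted to strengthen the statement to also assert $\mathrm{AERP} \in \Pproblem$ (so that it genuinely lies in $\Pproblem \setminus \PsT$); this is easily handled by noting that checking whether an assignment satisfies an algebraic equation is trivially polynomial-time computable by direct substitution and evaluation, which together with the contradiction above places $\mathrm{AERP}$ in $\Pproblem \setminus \PsT$ in the same strong sense as CVP.
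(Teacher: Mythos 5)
Your proposal is correct and matches the paper's intended derivation exactly: the corollary is stated without an explicit proof precisely because it follows immediately from ${\rm CVP} \in \Pproblem \setminus \PsT$ (Theorem~\ref{thm:cvp}), the reduction ${\rm CVP} \le^{\tt PsT}_m {\rm AERP}$, and the closure of \PsT under $\le^{\tt PsT}_m$ (Theorem~\ref{thm:ps-sub-closed}), argued by contradiction just as you do. Your additional remark that AERP itself lies in \Pproblem is a harmless (and correct) strengthening beyond what the corollary claims.
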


\eat{
\noindent Online Set Disjointness:\vspace{-1ex}
\begin{itemize}
	\item[$\circ$]{\bf Given:} a family of sets ${\cal F}$, and two sets $S, S' \in {\cal F}$.
	\item[$\circ$]{\bf Problem:} Are $S$ and $S'$ disjoint or not?
\end{itemize}

\noindent Common Neighbors Detection:\vspace{-1ex}
\begin{itemize}
	\item[$\circ$]{\bf Given:} a graph $G$, and two vertices $u$ and $v$.
	\item[$\circ$]{\bf Problem:} Does $u$ and $v$ share a common neighbor.
\end{itemize}

$d$-dimensional NNS can be solved with the help of k-d tree in $O(dn^{1-\frac{1}{d}})$ time.
}

Also, $\le^{\tt PsT}_m$ can also be used to derive efficient algorithms for problems in $\PsT$.
In the breakthrough work of dynamic DFS on undirected graphs~\cite{DBLP:conf/soda/BaswanaCC016}, Baswana \etal defined a nearest edge query between a subtree and an ancestor-descendant path in the procedure of rerooting a DFS tree, which was used in almost all subsequent work.
Chen et al. showed that this query could be solved by running a range successor query~\cite{DBLP:conf/swat/ChenDWZZ18}.
We refine the procedure as a pseudo-sublinear-time reduction. 
The definitions of these two problems are given as follows.\smallskip

\noindent{\bf Nearest Edge Query (NEQ):}\vspace{-1.5ex}
\begin{itemize}
	\item[$\circ$]{\bf Given:} A DFS tree $T$ of graph $G$, the endpoints $x, y$ of an ancestor-descendant path, the root $w$ of a subtree $T(w)$ such that $par(w) \in path(x, y)$.
	\item[$\circ$]{\bf Problem:} Find the edge $e$ that is incident nearest to $x$ among all edges between $T(w)$ and $path(x, y)$.
\end{itemize}

\noindent{\bf Range Successor Query (RSQ):}\vspace{-1.5ex}
\begin{itemize}
	\item[$\circ$]{\bf Given:} A set of $d$-dimensional points $S$, a query rectangle $Q = {\rm \Pi}^d_{i = 1}[a_i, b_i]$.
	\item[$\circ$]{\bf Problem:} Find the point $p$ with smallest $x$-coordinate among all points that are in the rectangle $Q$.
\end{itemize}

\begin{theorem}{\rm \cite{DBLP:conf/swat/ChenDWZZ18}}
	${\rm NEQ} \le_m^{\tt PsT} {\rm RSQ}$.
\end{theorem}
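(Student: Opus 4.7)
The plan is to encode every edge of $G$ as a two-dimensional point keyed by DFS-order labels, so that ``edges between $T(w)$ and $path(x,y)$'' becomes ``points inside an axis-aligned rectangle'' and ``incident nearest to $x$'' becomes ``smallest coordinate along one axis.'' Let $pre(\cdot)$, $size(\cdot)$, and $dep(\cdot)$ denote the pre-order, subtree-size, and depth labels induced by $T$; each of these arrays can be produced from $T$ by Euler-tour and list-ranking methods in \NC with linear-size output. I would let $f_2(T)$ be the concatenation of these vertex-indexed arrays together with a header recording the root, and let $f_1(T)$ be the point set that places, for each edge $(u,v)\in E(G)$ with $u$ the shallower endpoint, a labelled point $(pre(v),\,dep(u))$ in the plane. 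In an undirected DFS tree every edge is an ancestor--descendant edge, so ``shallower endpoint'' is well defined; the total point set has size $\Theta(|E(G)|)$, keeping $f_1$ both linear-size and \NC computable once the arrays of $f_2$ are in place.

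The query mapper $g\bigl(f_2(T),\langle w,x,y\rangle\bigr)$ then emits the rectangle
\[Q \;=\; \bigl[\,pre(w),\ pre(w)+size(w)-1\,\bigr]\;\times\;\bigl[\,dep(x),\ dep(y)\,\bigr],\]
together with the axis along which the successor is to be minimised. Each of the four corners is recovered by $O(1)$ indexed reads into $f_2(T)$ on a \RATM, so $g$ runs in $O(\log|T|)$ time and is certainly \PsT computable. A solver for RSQ on $(f_1(T),Q)$ returning the point with smallest second coordinate then outputs the edge whose path endpoint sits at the shallowest depth, which is exactly the edge incident nearest to the ancestor $x$. Combined with Theorem~\ref{thm:ps-sub-closed}, this certifies ${\rm NEQ}\in\PsT$ once RSQ is.

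The main obstacle is proving that $Q$ captures exactly the edges named in the NEQ instance, neither more nor fewer. That a point lies in the pre-order interval iff its deeper endpoint is in $T(w)$ is immediate from the standard pre-order characterisation of subtrees. What requires genuine care is showing that a shallower endpoint $u$ with $dep(u)\in[dep(x),dep(y)]$ is necessarily on $path(x,y)$ rather than merely at a matching depth. Here the precondition $par(w)\in path(x,y)$ does the real work: because the deeper endpoint lies in $T(w)$, $u$ is forced to be an ancestor of $w$, and the ancestor chain of $w$ restricted to depths in $[dep(x),dep(par(w))]$ coincides with $path(x,par(w))\subseteq path(x,y)$. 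The boundary case in which $y\in T(w)$, so $path(x,y)$ re-enters $T(w)$ along the $w$-to-$y$ segment, has to be isolated by a second rectangle of the same form, which preserves linear total size and linear preprocessing. Faithfully reproducing this geometric case analysis from~\cite{DBLP:conf/swat/ChenDWZZ18} inside the $\le_m^{\tt PsT}$ template is the one step of the argument that is not purely mechanical.
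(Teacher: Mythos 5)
Your overall strategy is the same as the paper's (which follows Chen et al.): encode each edge as a planar point via DFS-tree labels so that the subtree side of the query becomes a preorder interval, turn the query into an axis-aligned rectangle, and let the range-successor minimum identify the endpoint nearest to $x$. The \NC/linear-size bookkeeping for $f_1$, $f_2$ and the $O(\log |T|)$ query mapper are fine. However, your correctness argument --- the part you yourself flag as the crux --- contains a genuine gap.

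The false step is the claim that ``because the deeper endpoint lies in $T(w)$, $u$ is forced to be an ancestor of $w$.'' Being an ancestor of a vertex $v \in T(w)$ only forces $u$ onto the root-to-$v$ path, and that path splits into two pieces: the proper ancestors of $w$, and the vertices of $T(w)$ between $w$ and $v$. Your rectangle $[\,pre(w),\,pre(w)+size(w)-1\,]\times[\,dep(x),\,dep(y)\,]$ does nothing to exclude the second piece: whenever the query path descends below $par(w)$ (i.e.\ $dep(y)\ge dep(w)$), every edge lying entirely inside $T(w)$ whose shallower endpoint has depth at most $dep(y)$ contributes a spurious point to $Q$. These spurious points happen to be harmless when a genuine answer exists, because genuine path endpoints have depth at most $dep(par(w)) < dep(w)$ and therefore win the minimization; but when no edge between $T(w)$ and $path(x,y)$ exists, the query returns one of the spurious internal edges --- a false positive --- so $Q$ does not capture ``exactly'' the edges of the NEQ instance, which is what you set out to prove. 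The fix is simple: cap the depth coordinate at $dep(w)-1 = dep(par(w))$ instead of $dep(y)$; then $u \in T(w)$ is impossible and your ancestor-chain argument goes through (and the $y \in T(w)$ ``second rectangle'' becomes unnecessary). The paper sidesteps the issue entirely by keying the path endpoint by preorder rather than depth, using the rectangle $[\rho(x), \rho(w)-1]\times[\rho(w), \gamma(w)]$: since $T(w)$ is exactly the preorder interval $[\rho(w), \gamma(w)]$, a first coordinate below $\rho(w)$ already rules out internal edges, and minimizing preorder along the ancestor chain is the same as minimizing depth. One further mismatch: RSQ as defined in the paper always returns the point with smallest \emph{first} coordinate and takes no ``axis'' parameter, so you should swap your coordinates (path-endpoint label first) rather than ask the solver to minimize along the second axis.
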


\begin{proof}
	Given a graph $G = (V, E)$ and a DFS tree $T$ of $G$, define $f_1:E \to S$ as follows, where $S$ is a set of $2$-dimensional points.
	Denote the preorder traversal sequence of $T$ by $\rho$, note that every subtree of $T$ can be represented by a continuous interval of $\rho$.
	Let $\rho(v)$ denote the index of vertex $v$ in this sequence that is if $v$ is the $i$-th element in $\rho$, then $\rho(v) = i$.
	For each edge $(u, v) \in E$, $f_1((u, v)) = (\rho(u), \rho(v))$.
	That is for each edge $(u, v)$, a point $(\rho(u), \rho(v))$ is added into $S$.
	Notice that for each point $p \in S$, there exists exactly one edge $(u, v)$ associated with $p$.
	Next we state the information provided by $f_2(\cdot)$.
	For each vertex $v$, let $\gamma(v) = \max_{w \in T(v)}\rho(w)$, \ie, the maximum index of vertices in $T(v)$.
	Thus, define $f_2(v)$ as $\rho(v)\#\gamma(v)$ for each $v \in V$.
	
	Then, to answer an arbitrary query instance $T(w), p(x, y)$, let $g$ be the function mapping $w, x, y$ to a rectangles ${\rm\Omega} = [\rho(x),\rho(w) - 1]\times[\rho(w), \gamma(w)]$.
	Finally, given a point $p \in S$ as the final result of RSQ, let $h(\cdot, \cdot)$ be reverse function of $f_1(\cdot)$, \ie, it returns the edge of $G$ corresponding to $p$.
	It is easily to verify that the edge corresponding to the point with minimum x-coordinate is the edge nearest to $x$ among all edges between $T(w)$ and $path(x, y)$ \cite{DBLP:conf/swat/ChenDWZZ18}.
	
	The preorder traversal sequence of $T$ can be obtained by performing a DFS on it, which can be done in \NC as stated in \cite{smith1986parallel}.
	Therefore, both $f_1(\cdot)$ and $f_2(\cdot)$ are \NC computable.
	Moreover, since each $e \in E$, there is a point $p = f_1(e)$ in $S$ and for each point $p \in S$, there is exactly one edge $e$ associated with $p$, we have $|f_1(G)| \in O(|G|)$.
	Similarly, for each vertex $v$, $f_2(v)$ records two values for it.
	Hence, $|f_2(G)| \in O(|G|)$.
	As for $g(\cdot, \cdot)$ and $h(\cdot, \cdot)$, with the mapping provided by $f_2(\cdot)$, both of them can be computed in sublinear time.
	\qed
\end{proof}

Notice that for optimization problems, we need not only the functions converting the data part and problem part of $\CP_1$ to corresponding part of $\CP_2$, but also a function $h(\cdot, \cdot)$ mapping the solution of $\CP_2$ back to the solution of $\CP_1$.
The resources restriction of $h(\cdot, \cdot)$ is set to be the same as $g(\cdot, \cdot)$.
There is numerous work showing that RSQ belongs to \PsT \cite{DBLP:conf/swat/NekrichN12}.
Hence, with the fact that the complexity class \PsT is closed under $\le^{\tt PsT}_m$, the following corollary is obtained.

\begin{corollary}
	${\rm NEQ} \in $ {\rm \PsT}.
\end{corollary}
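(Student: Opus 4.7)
The plan is to derive this corollary as an immediate consequence of three ingredients already available in the paper: the reduction ${\rm NEQ} \le_m^{\tt PsT} {\rm RSQ}$ established in the preceding theorem, the known membership ${\rm RSQ} \in \PsT$ from the cited work of Nekrich and Navarro, and the closure of \PsT under $\le_m^{\tt PsT}$ proved in Theorem~\ref{thm:ps-sub-closed}. So essentially no new construction is required; the proof is just a single application of the closure property.

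Concretely, I would first state that by the prior theorem, there exist linear-size \NC computable functions $f_1, f_2$ and a \PsT computable function $g$ realising the pseudo-sublinear-time reduction from NEQ to RSQ (the solution-mapping $h$ being charged to the same resource bound as $g$, as noted in the paragraph preceding the corollary). Next, I would cite \cite{DBLP:conf/swat/NekrichN12} to note that RSQ admits a \PTIME preprocessing $\Pre_{\rm RSQ}$ after which each query rectangle can be answered in $o(|S|)$ time on the $d$-dimensional point set $S$, witnessing ${\rm RSQ}\in\PsT$.

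Having both facts, I would invoke Theorem~\ref{thm:ps-sub-closed} directly: since ${\rm NEQ} \le_m^{\tt PsT} {\rm RSQ}$ and ${\rm RSQ}\in\PsT$, closure of \PsT under $\le_m^{\tt PsT}$ yields ${\rm NEQ}\in\PsT$. For concreteness I might spell out the resulting preprocessing for NEQ on a graph $G$ with DFS tree $T$, namely $\Pre_{\rm NEQ}(G,T) = \llcorner|f_2(G,T)|\lrcorner\#f_2(G,T)\#\Pre_{\rm RSQ}(f_1(G,T))$, exactly as produced by the proof of Theorem~\ref{thm:ps-sub-closed}, and observe that any online query $(T(w),x,y)$ is answered by computing $g(f_2(G,T),(T(w),x,y))$, running the \PsT query algorithm for RSQ on $\Pre_{\rm RSQ}(f_1(G,T))$, and finally applying $h$ to translate the returned point back to the incident edge; since $|f_1(G,T)|,|f_2(G,T)| \in O(|G|)$, the overall query time is $o(|G|)$.

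The only potential obstacle is a bookkeeping subtlety: NEQ is stated as a search problem (return an edge) rather than a decision problem, whereas $\le_m^{\tt PsT}$ was defined for decision problems. I would handle this by explicitly appealing to the remark just before the corollary, which extends the framework to optimization/search problems via a sublinear-time solution-translation function $h$ subject to the same complexity bound as $g$; once that extension is in force, the closure argument goes through unchanged and the corollary follows.
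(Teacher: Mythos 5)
Your proposal is correct and follows essentially the same route as the paper: the corollary is obtained exactly by combining the reduction ${\rm NEQ} \le_m^{\tt PsT} {\rm RSQ}$, the cited fact that ${\rm RSQ} \in \PsT$, and the closure of \PsT under $\le_m^{\tt PsT}$ from Theorem~\ref{thm:ps-sub-closed}, with the search-problem issue handled by the same solution-mapping function $h(\cdot,\cdot)$ remark that precedes the corollary in the paper. Your additional spelling-out of the composed preprocessing and query procedure is a faithful instantiation of the closure proof rather than a different argument.
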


	\section{Pseudo-polylog-time Reduction}\label{sec:ps-polylog}

In this section, we introduce the notion of pseudo-polylog-time reduction, which will be used to clarify the difference between \PsT and \PsPL.

\begin{definition}\label{def:red-sps-plog}
	\emph{
		A decision problem $\CP_1$ is \textit{pseudo-polylog-time reducible} to a decision problem $\CP_2$, denoted as $\CP_1 \le^{\tt PsPL}_m \CP_2$, if there is a triple $\langle f_1(\cdot), f_2(\cdot), g(\cdot, \cdot)\rangle$, where $f_1(\cdot)$ and $f_2(\cdot)$ are \NC computable functions and $g(\cdot, \cdot)$ is a \PPL computable function, such that for any pair of strings $\langle D, P\rangle$ it holds that
		\[\langle D, P\rangle \in \CP_1 \Leftrightarrow \langle f_1(D), g(f_2(D), P) \rangle \in \CP_2.\]
	}
\end{definition}

With similar proof of Theorem \ref{thm:ps-sub-trans} and Theorem \ref{thm:ps-sub-closed}, we can show that $\le^{\tt PsPL}_m$ is transitive and the complexity class \PsPL is closed under $\le^{\tt PsPL}_m$..

\begin{theorem}
	If $\CP_1 \le^{\tt PsPL}_m \CP_2$ and $\CP_2 \le^{\tt PsPL}_m \CP_3$, then also $\CP_1 \le^{\tt PsPL}_m \CP_3$.
\end{theorem}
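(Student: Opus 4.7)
The plan is to mimic the proof of Theorem~\ref{thm:ps-sub-trans} almost verbatim, with the computational-resource bookkeeping adjusted to the new bounds: the functions $f_1, f_2$ are now \NC computable (without the linear-size restriction), while $g$ is \PPL computable rather than \PsT computable.

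Starting from $\CP_1 \le^{\tt PsPL}_m \CP_2$ via a triple $\langle f_1, f_2, g\rangle$ and $\CP_2 \le^{\tt PsPL}_m \CP_3$ via a triple $\langle f'_1, f'_2, g'\rangle$, I would define composite functions
\[f''_1(x) = f'_1(f_1(x)),\quad f''_2(x) = \llcorner |f_2(x)|\lrcorner \# f_2(x) \# f'_2(f_1(x)),\]
and $g''(x,y) = g'(q, g(p,y))$ whenever $x$ parses as $\llcorner |p|\lrcorner \# p \# q$, where $\#$ is a fresh separator symbol. The chain of equivalences
\[\langle D_1,P_1\rangle \in \CP_1 \Leftrightarrow \langle f''_1(D_1), g''(f''_2(D_1), P_1)\rangle \in \CP_3\]
then follows by unfolding the two given reductions, exactly as in the proof of Theorem~\ref{thm:ps-sub-trans}.

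The verification that $f''_1$ and $f''_2$ are \NC computable is immediate from the fact that \NC is closed under composition and under concatenation with the binary-length encoding (which is itself \NC computable). The point that needs slightly more care is that $g''$ is \PPL computable in the length of its input $f''_2(D_1)$. Given $f''_2(D_1)$, a \RATM first reads the prefix $\llcorner |f_2(D_1)|\lrcorner$ in $O(\log|f''_2(D_1)|)$ time, uses it to compute in logarithmic time the indices of the two $\#$ markers, and thereby obtains random access to the blocks $p = f_2(D_1)$ and $q = f'_2(f_1(D_1))$ without scanning them. It then invokes $g$ on $(p, P_1)$ in time ${\rm polylog}(|p|)$ and $g'$ on $(q, g(p,P_1))$ in time ${\rm polylog}(|q|)$; since $|p|, |q| \le |f''_2(D_1)|$, the total running time is ${\rm polylog}(|f''_2(D_1)|)$.

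The only mild obstacle is bookkeeping the argument lengths so that the polylog bound on $g''$ really is polylog in the size of its input and not in $|D_1|$, because the definition of \PsPL reductions measures $g$'s time in $|f_2(D)|$ rather than in $|D|$. This is resolved by the length-prefix trick above, which lets the machine locate the substrings $p$ and $q$ in logarithmic time and then apply the polylog-time machines for $g$ and $g'$ directly on them. With that in place, $\langle f''_1, f''_2, g''\rangle$ witnesses $\CP_1 \le^{\tt PsPL}_m \CP_3$, completing the proof.
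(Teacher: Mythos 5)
Your proposal is correct and takes essentially the same route as the paper's proof: the identical composite triple $f''_1(x) = f'_1(f_1(x))$, $f''_2(x) = \llcorner |f_2(x)|\lrcorner \# f_2(x) \# f'_2(f_1(x))$, $g''(x,y) = g'(q, g(p,y))$ with the length-prefix separator trick, the same chain of equivalences, and the same resource verification, which the paper compresses to ``it is easy to verify'' while you spell out the logarithmic-time parsing and the polylog bookkeeping. One minor remark: your convention $g''(x,y) = g'(q, g(p,y))$ (with $p = f_2(x)$, $q = f'_2(f_1(x))$) is the consistent one; the paper's own statement of this definition writes $g'(p, g(q,y))$, an evident typo that its subsequent equivalence chain ignores, so your version is in fact the cleaner formulation.
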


\begin{proof}
	From $\CP_1 \le^{\tt PsPL}_m \CP_2$ and $\CP_2 \le^{\tt PsPL}_m \CP_3$, it is known that there exist four \NC computable functions $f_1(\cdot), f'_1(\cdot)$, $f_2(\cdot)$ $f'_2(\cdot)$, and two \PPL computable functions $g(\cdot, \cdot), g'(\cdot,\cdot)$ such that for any pair of strings $\langle D_1, P_1\rangle$ and $\langle D_2, P_2\rangle$ it holds that
	\[\langle D_1, P_1 \rangle \in \CP_1 \Leftrightarrow \langle f_1(D_1), g(f_2(D_1), P_1) \rangle \in \CP_2,\]
	\[\langle D_2, P_2 \rangle \in \CP_2 \Leftrightarrow \langle f'_1(D_2), g'(f'_2(D_2), P_2) \rangle \in \CP_3.\]
	
	To show $\CP_1 \le^{\tt PsPL}_m \CP_3$, we define two \NC computable functions $f_1''(\cdot)$, $f''_2(\cdot)$ and a \PPL computable function $g''(\cdot)$ as follows.
	Let $f''_1(x) = f'_1(f_1(x))$, $f''_2(x) = \llcorner |f_2(x)|\lrcorner \# f_2(x)\#f'_2(f_1(x))$ and $g''(x, y) = g'(p, g(q, y))$ if $x = \llcorner |p| \lrcorner \# p \# q$, where \# is a special that is not used anywhere else.
	Then we have
	\begin{equation*}
		\begin{aligned}
			\langle D_1, P_1 \rangle \in \CP_1 &\Leftrightarrow \langle f_1(D_1), g(f_2(D_1), P_1)\rangle \in \CP_2\\
			&\Leftrightarrow \langle f'_1(f_1(D_1)), g'(f'_2(f_1(D_1)), g(f_2(D_1), P_1)) \rangle \in \CP_3\\
			&\Leftrightarrow \langle f''_1(D_1), g''(\llcorner|f_2(D_1)|\lrcorner\#f_2(D_1)\#f'_2(f_1(D_1)), P_1) \rangle \in \CP_3\\
			&\Leftrightarrow \langle f''_1(D_1), g''(f''_2(D_1), P_1) \rangle \in \CP_3
		\end{aligned}
	\end{equation*}
	It is easy to verify that $f''_1(\cdot), f''_2(\cdot)$ are in \NC and $g''(\cdot, \cdot)$ is in \PPL.
	\qed
\end{proof}

\begin{theorem}\label{thm:ps-polylog-closed}
	The complexity class \PsPL is closed under $\le^{\tt PsPL}_m$.
\end{theorem}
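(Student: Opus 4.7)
The plan is to mirror the construction used in the proof of Theorem~\ref{thm:ps-sub-closed}, replacing the sublinear time budget by a polylogarithmic one, and then check that the running time analysis still goes through without the linear-size restriction on $f_1$ and $f_2$.

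First, I would take $\CP_1 \le^{\tt PsPL}_m \CP_2$ and $\CP_2\in\PsPL$ as hypotheses, and extract the witnesses: by Definition~\ref{def:red-sps-plog} there are \NC computable functions $f_1(\cdot)$, $f_2(\cdot)$ and a \PPL computable function $g(\cdot,\cdot)$ such that $\langle D_1,P_1\rangle\in\CP_1 \Leftrightarrow \langle f_1(D_1), g(f_2(D_1),P_1)\rangle \in \CP_2$, and by Definition~\ref{def:pspl} there is a \PTIME preprocessing $\Pre_2(\cdot)$ together with a \RATM $M_2$ that decides $P_2(\Pre_2(D_2))$ in $O({\rm polylog}(|D_2|))$ time. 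I would then define
\[\Pre_1(x) \;=\; \llcorner |f_2(x)|\lrcorner \,\#\, f_2(x)\,\#\, \Pre_2(f_1(x)),\]
where \# is a fresh separator, exactly as in Theorem~\ref{thm:ps-sub-closed}. Since $f_1, f_2$ are \NC-computable (and hence \PTIME-computable), $\Pre_2$ is \PTIME-computable, and length encoding and concatenation are \PTIME, the function $\Pre_1(\cdot)$ is \PTIME computable.

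Next, I would build a \RATM $M_1$ that, on input $\langle \Pre_1(D_1), P_1\rangle$, (i) reads the prefix $\llcorner |f_2(D_1)|\lrcorner$ to locate the second \#; (ii) uses random access to the middle block $f_2(D_1)$ to compute $g(f_2(D_1),P_1)$; and (iii) simulates $M_2$ on input $\langle \Pre_2(f_1(D_1)), g(f_2(D_1),P_1)\rangle$ using the suffix after the second \#. Correctness follows from the reduction's defining equivalence combined with $P_2(\Pre_2(\cdot)) = P_2(\cdot)$.

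The running-time analysis is where I would be careful, and it is also the only conceptually new point compared with Theorem~\ref{thm:ps-sub-closed}. Step (i) costs $O(\log|D_1|)$. For steps (ii) and (iii), I would observe that \NC-computable functions run in polynomial time, so $|f_1(D_1)|,|f_2(D_1)|\le {\rm poly}(|D_1|)$; therefore ${\rm polylog}(|f_i(D_1)|) = O({\rm polylog}(|D_1|))$ for $i=1,2$. Consequently step (ii) takes $O({\rm polylog}(|f_2(D_1)|)) = O({\rm polylog}(|D_1|))$ time and step (iii) takes $O({\rm polylog}(|\Pre_2(f_1(D_1))|)) = O({\rm polylog}(|D_1|))$ time. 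Summing the three contributions yields $O({\rm polylog}(|D_1|))$, so $\CP_1\in\PsPL$ via the preprocessing $\Pre_1$ and the machine $M_1$.

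The main (minor) obstacle is the observation that no linear-size restriction on $f_1,f_2$ is required here: because polylog absorbs any polynomial blow-up, replacing the linear-size hypothesis of Theorem~\ref{thm:ps-sub-closed} by bare \NC-computability is harmless. Everything else is a direct transcription of the earlier proof with the bound $o(|D|)$ replaced by $O({\rm polylog}(|D|))$.
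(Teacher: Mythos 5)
Your proposal is correct and follows essentially the same route as the paper's proof: the same preprocessing function $\Pre_1(x) = \llcorner|f_2(x)|\lrcorner\#f_2(x)\#\Pre_2(f_1(x))$, the same three-phase \RATM $M_1$ (locate the second \#, compute $g$, simulate $M_2$), and the same running-time decomposition $t_I + t_g + t_{M_2}$. Your explicit remark that polylog absorbs the polynomial output size of \NC functions is exactly what the paper uses implicitly when it concludes the bound $O(\log^{c_1}|D_1|)$ with $c_1 = \max\{c_2, c_3\}$, so there is no substantive difference.
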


\begin{proof}
	From $\CP_1 \le^{\tt PsPL}_m \CP_2$, we know that there exist two \NC computable functions $f_1(\cdot)$, $f_2(\cdot)$, and a \PPL computable function $g(\cdot, \cdot)$ such that for any pair of strings $\langle D_1, P_1\rangle$ it holds that
	\[\langle D_1, P_1 \rangle \in \CP_1 \Leftrightarrow \langle f_1(D_1), g(f_2(D_1), P_1) \rangle \in \CP_2.\]
	
	Furthermore, since $\CP_2 \in \PsPL$, there exists a \PTIME preprocessing function $\Pre_2(\cdot)$ such that for any pair of strings $\langle D_2, P_2\rangle$ it holds that: $P_2(\Pre_2(D_2)) = P_2(D_2)$, and $P_2(\Pre_2(D_2))$ can be solved by a \RATM $M_2$ in $O(\log^{c_2}{|D_2|})$ for some $c_2 \ge 1$.
	Therefore, for any pair of strings $\langle D_1, P_1\rangle$ we have,
	\[P_1(D_1) = g(f_2(D_1), P_1)(f_1(D_1)) = g(f_2(D_1), P_1)({\rm\Pi_2}(f_1(D_1))).\]
	
	To show $\CP_1 \in \PsPL$, we claim that there exist a \PTIME preprocessing function ${\rm \Pi_1}(\cdot)$ for $\CP_1$ such that $P_1(D_1) = P_1({\rm\Pi_1}(D_1))$ and a \RATM for $P_1({\rm\Pi}_1(D_1))$ running in polylogarithmic time as required in Definition \ref{def:pspl}.
	First, let ${\rm\Pi_1}(x) = \llcorner|f_2(x)|\lrcorner\#f_2(x)\#{\rm\Pi_2}(f_1(x))$, where $\#$ is a special symbol that is not used anywhere else.
	Then we construct a \RATM $M_1$ by appending a pre-procedure to $M_2$.
	More concretely, with input ${\rm\Pi_1}(D_1)$ and $P_1$, $M_1$ first copies $\llcorner|f_2(x)|\lrcorner$ to one of its work tapes and computes the index of the second $\#$, which equals to $|f_2(x)| + |\llcorner|f_2(x)|\lrcorner| + 1$.
	Then $M_1$ generates $g(f_2(D_1), P_1)$ according to the information between the two \#s.
	Finally, $M_1$ simulates the computation of $M_2$ with input ${\rm\Pi}_2(f_1(D_1))$ behind the second \# and $g(f_2(D_1), P_1)$, then outputs the result returned by $M_2$.
	
	Since ${\rm\Pi_2}(\cdot)$ is in \PTIME, $f_1(\cdot)$ and $f_2(\cdot)$ are in \NC, and the length of a string is logarithmic time computable ${\rm\Pi_1}(\cdot)$ is obviously in \PTIME.
	Notice that computing the index of the second \# requires $t_I = O(\log{|f_2(D_1)|})$ time and $g(\cdot, \cdot)$ is computable in time $O(\log^{c_3}{|f_2(D_1)|})$ for some $c_3 \ge 1$.
	Therefore, the total running time of $M_1$ is bounded by $t_I + t_g + t_{M_2} = O(\log^{c_3}{|f_2(D_1)|} + \log^{c_2}{|f_1(D_1)|}) = O(\log^{c_1}{|D_1|})$ where $c_1 = \max\{c_2, c_3\}$.
	Thus, $\CP_1 \in \PsPL$.
	\qed
\end{proof}

Due to the limitations of fractional power functions, the complexity class \PsT is not closed under $\le^{\tt PsPL}_m$ unless we add an addition linear-size restriction of function $f_1(\cdot)$.
Fortunately, this does not prevent us from defining \PsT-completeness.

\begin{definition}
	\emph{
		A problem \CP is \PsT-hard under $\le^{\tt PsPL}_m$ if $\CP' \le^{\tt PsPL}_m \CP$ for all $\CP' \in \PsT$.
		A problem \CP is \PsT-complete under $\le^{\tt PsPL}_m$ if \CP is \PsT-hard and $\CP \in \PsT$.
	}
\end{definition}

Identifying the \PsT-complete problems may help us to separate \PsT and \PsPL.
That is if there is a \PsT-complete problem belonging to \PsPL, then $\PsPL = \PsT$.
In the following, we give a specified range of possible complete problems for \PsT, by relating them to a well-known \Pproblem-complete problem.
Given a graph $G$, a depth-first search(DFS) traverses $G$ in a particular order by picking an unvisited vertex $v$ from the neighbors of the most recently visited vertex $u$ to search, and backtracks to the vertex from where it came when a vertex $u$ has explored all possible ways to search further.\smallskip

\noindent {\bf Ordered Depth-First Search (ODFS):}\vspace{-1.5ex}
\begin{itemize}
	\item[$\circ$]{\bf Given:} A graph $G = (V, E)$ with fixed adjacent lists, fixed starting vertex $s$, and vertices $u$ and $v$.
	\item[$\circ$]{\bf Problem:} Does vertex $u$ get visited before vertex $v$ in the DFS traversal of $G$ starting from $s$? 
\end{itemize}

\begin{theorem}{\rm \cite{DBLP:journals/ipl/Reif85}}\label{odfs:pcomplete}
	{\rm ODFS} is {\rm \Pproblem}-complete under {\rm\NC} reduction.
\end{theorem}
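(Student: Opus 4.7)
The plan is in two parts: membership in \Pproblem and \Pproblem-hardness under \NC reduction. Membership is immediate, since a depth-first traversal of $G$ from $s$ runs in linear time; one records the visitation time of every vertex and simply compares the timestamps of $u$ and $v$. So the entire content of the theorem is the hardness direction.

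For the hardness I would exhibit an \NC reduction from the Circuit Value Problem, which is already \Pproblem-complete under \NC reduction. Given a Boolean circuit $\alpha$ on inputs $x_1,\dots,x_d$, the goal is to build a graph $G_\alpha$ with a fixed adjacency-list ordering, a starting vertex $s$, and two designated vertices $u, v$, such that the DFS of $G_\alpha$ starting at $s$ visits $u$ before $v$ if and only if $\alpha$ outputs \texttt{TRUE} on $x_1,\dots,x_d$. The construction proceeds gate-by-gate: for every gate $g$ I introduce a pair of rail vertices $t_g,f_g$ together with a small constant-size gadget, and I wire the adjacency lists so that DFS reaches $t_g$ before $f_g$ exactly when $g$ evaluates to true. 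Separate AND, OR, and NOT gadgets are designed so that this rail-order invariant propagates from a gate's input rails to its output rails. Input literals are hard-wired into the adjacency list of $s$ according to the bits of the input, and $u,v$ are taken to be the two rails of the output gate. Since every gadget is local and of constant size, the edges and adjacency orderings can be produced in parallel from the description of $\alpha$, so the whole map $\alpha\mapsto (G_\alpha, s, u, v)$ is computable in \NC.

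The main obstacle is the gadget design itself. DFS is inherently sequential and its ordering is globally history-dependent, so one must guarantee that once a sub-gadget has been fully explored the search backtracks to exactly the intended next port and never short-circuits the invariant via a stray back-edge. This is handled by long forcing chains and by choking off alternative exits with auxiliary vertices; the correctness proof is an induction in topological order on the gates of $\alpha$ showing that the rail-visitation invariant is preserved at each gate, and this bookkeeping is really the technical heart of Reif's original argument.
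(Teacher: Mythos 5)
The paper never proves this statement: it is imported verbatim from Reif's paper \cite{DBLP:journals/ipl/Reif85}, so there is no internal argument to compare against. Judged on its own terms, your proposal correctly reconstructs the architecture of Reif's argument --- membership is trivial by running DFS and comparing timestamps, and hardness goes by an \NC reduction from CVP (which the paper itself uses as its \Pproblem-complete anchor) that encodes gate values in the visitation order of designated vertex pairs. That is indeed the historically correct route.

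But as a proof there is a genuine gap, and you name it yourself: the AND/OR/NOT gadgets and the proof that the rail-order invariant propagates are \emph{asserted to exist}, not constructed, and they are the entire mathematical content of the theorem. The difficulty is not generic ``bookkeeping.'' Two concrete obstructions must be overcome. First, fan-out: a gate's output rail is read by several consumer gates, so the DFS may first enter that gadget from whichever consumer happens to be explored earliest, at an unpredictable point in the global traversal; a constant-size local gadget must therefore encode the gate's truth value in a way that is invariant under the direction and timing of first entry, and must not emit back- or cross-edges that reorder rails elsewhere. Second, the invariant ``$t_g$ visited before $f_g$ iff $g$ is true'' is a claim about a globally history-dependent process, so the induction over gates in topological order needs a strengthened hypothesis describing the exact state of the DFS stack each time a gadget is entered --- this strengthening is precisely where Reif's construction does its work. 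Without exhibiting the gadgets and carrying out that induction, what you have is a faithful plan for Reif's proof rather than a proof; for the paper's purposes the citation suffices, but a blind re-derivation would have to fill in exactly the part you deferred.
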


\begin{theorem}
	Given a problem $\CP$, if $\CP$ is {\rm \PsT}-complete, then $\CP$ is {\rm \Pproblem}-complete.
\end{theorem}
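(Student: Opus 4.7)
The plan is to use ODFS as a bridge between \PsT-hardness and \Pproblem-hardness. By Theorem~\ref{odfs:pcomplete}, ODFS is \Pproblem-complete under \NC reduction, so for every $\CP' \in \Pproblem$ we already have $\CP' \le^{\tt NC} \text{ODFS}$; composing this with an \NC reduction $\text{ODFS} \le^{\tt NC} \CP$ would, together with $\CP \in \PsT \subseteq \Pproblem$ (which follows directly from the definitions, cf. Theorem~\ref{thm:ps-sub-closed}), yield \Pproblem-completeness of $\CP$. So the task reduces to producing the \NC reduction from ODFS to $\CP$.

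To produce that reduction, the first step is to show $\text{ODFS} \in \PsT$. The preprocessing $\Pre(G)$ I would use takes the graph $G = (V,E)$ with its fixed adjacency lists and, for every possible start vertex $s \in V$, performs an ordinary DFS and stores the resulting visit-time in an array $\sigma_s$ indexed by vertex. This is clearly \PTIME-computable (time $O(|V|(|V|+|E|))$, size $O(|V|^2)$). Given a query $(s,u,v)$, a \RATM only needs two random accesses into $\sigma_s$ and an $O(\log |V|)$-bit comparison, so the query runs in $o(|G|)$ time; in fact $\text{ODFS} \in \PsPL$. Since $\CP$ is \PsT-hard under $\le^{\tt PsPL}_m$, we therefore obtain $\text{ODFS} \le^{\tt PsPL}_m \CP$ via some triple $\langle f_1, f_2, g\rangle$ with $f_1, f_2 \in \NC$ and $g \in \PPL$.

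The remaining step is to upgrade the $\le^{\tt PsPL}_m$-reduction into an \NC reduction. The key observation is that $\PPL \subseteq \NC$: a \RATM running in $O(\text{polylog}(n))$ steps performs only polylogarithmically many operations (including random-access jumps), each of which can be carried out in constant parallel depth on inputs of length $n$, so the entire computation can be simulated by a polylog-depth, polynomial-size circuit. Consequently the composed map $\langle D, P \rangle \mapsto \langle f_1(D), g(f_2(D), P) \rangle$ is \NC-computable, giving $\text{ODFS} \le^{\tt NC} \CP$ and completing the argument. The most delicate point I expect is this last inclusion $\PPL \subseteq \NC$: it is folklore but genuinely depends on the uniformity convention adopted for \NC in Theorem~\ref{odfs:pcomplete} and on the fact that random access on a \RATM indexes into an input of polynomial size, so I would spell out the simulation carefully (indices are polylog-long, address decoding is in $\NC^1$, each step of the \RATM becomes constant circuit depth, and polylog many such steps compose into polylog depth) rather than leave it implicit.
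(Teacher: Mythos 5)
Your proposal is correct and follows essentially the same route as the paper's own proof: establish ${\rm ODFS} \in \PsT$, use \PsT-hardness of $\CP$ to obtain ${\rm ODFS} \le^{\tt PsPL}_m \CP$, and compose this with the \NC reductions furnished by the \Pproblem-completeness of ODFS (Theorem~\ref{odfs:pcomplete}) to get \Pproblem-hardness, with membership coming from $\CP \in \PsT \subseteq \Pproblem$. The only difference is one of emphasis: the paper builds the composed \NC reduction concretely via string formatting of $h(x)$ and silently treats the \PPL-computable function $g$ as \NC-computable, whereas you isolate exactly that inclusion ($\PPL \subseteq \NC$, via circuit simulation of a polylog-time \RATM) as the delicate step --- a fact the paper's proof indeed relies on without comment.
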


\begin{proof}
	It is easy to see that ODFS is in \PsT.
	Since $\CP$ is \PsT-complete, ${\rm ODFS} \le^{\tt PsPL}_m \CP$.
	That is, there exist two \NC computable functions $f_1(\cdot)$, $f_2(\cdot)$ and a \PPL computable function $g(\cdot,\cdot)$ such that for all $\langle [G, s], [u, v] \rangle$ it holds that
	\[\langle [G, s], [u, v] \rangle \in {\rm ODFS} \Leftrightarrow \langle f_1([G, s]), g(f_2([G, s]), [u, v])\rangle \in \CP.\]
	As stated in Theorem \ref{odfs:pcomplete}, ODFS is \Pproblem-complete under \NC reduction.
	For any problem $L \in \Pproblem$, there is a \NC computable function $h(\cdot)$ such that
	\[x \in L \Leftrightarrow h(x) \in {\rm ODFS}.\]
	Recall that the input of ODFS consists of a graph $G$, a starting point $s$, and two vertices $u, v$.
	It is easy to modify the output format of $h(x)$ to $\llcorner|[G, s]|\lrcorner\#[G, s]\#[u,v]$ in \NC, where \# is a new symbol that is not used anywhere else.
	Now let $f'_1(x) = f_1(y)$ and $g'(x) = g(f_2(y), z)$, if $x = \llcorner|y|\lrcorner\#y\#z$.
	The two separators \# can be founded in logarithmic time.
	Consequently, it follows that
	\[x \in L \Leftrightarrow \langle h(x).[G,  s], h(x).[u, v]\rangle \in {\rm ODFS} \Leftrightarrow \langle f'_1(h(x)), g'( h(x))\rangle \in \CP.\]
	Let $h'(x) = f'_1(h(x))\circ g'(h(x))$ to denote the concentration of two parts of $\CP$ we can see that $L$ is \NC reducible to $\CP$.
	Therefore, $\CP$ is \Pproblem-complete.
	\qed
\end{proof}

	\section{Approximation Preserving Pseudo-sublinear-time Reduction}\label{sec:L}
A natural approach to cope with problems in $\Pproblem \setminus \PsT$ or that are \PsT-complete is to design pseudo-sublinear-time approximation algorithm.
Hence, in this section, we propose the pseudo-sublinear-time L-reduction, and prove that it linearly preserves approximation ratio for pseudo-sublinear-time approximation algorithms.

Let $\CP$ be a big data optimization problem, given a dataset $D$ and a problem instance $P \in \CP$ defined on $D$, let $P(D)$ denote the set of feasible solutions of $P$, and for any feasible solution $y \in P(D)$, let $\tau_\CP(y)$ denote the positive measure of $y$, which is called the objective function.
The goal of an optimization problem with respect to a problem instance $P \in \mathcal{P}$ is to find an optimum solution, that is, a feasible solution $y$ such that $\tau_{\cal P}(y) = \{\max, \min\}\{\tau_{\CP}(y') : y'\in P(D)\}$.
In the following, $\mathsf{opt}_{\mathcal{P}}$ will denote the function mapping an instance $P \in \mathcal{P}$ defined on $D$ to the measure of an optimum solution.

What's more, for each feasible solution $y$ of $D, P$, the \textit{approximation ratio} of $y$ with respect to $D, P$ is defined as $\rho(D, P, y) = \max\left\{\frac{\tau_\CP(y)}{{\tt opt}_{\CP}(D, P)}, \frac{{\tt opt}_{\CP}(D, P)}{\tau_\CP(y)}\right\}$.
The approximation ratio is always a number greater than or equal to 1 and is as close to 1 as the value of the feasible solution is close to the optimum value.
Let ${\cal A}$ be an algorithm that for any $D$ and problem instance $P \in \CP$ defined on $D$, returns a feasible solution ${\cal A}(\Pre(D), P)$ in sublinear time after a \PTIME preprocessing $\Pre(\cdot)$.
Given a rational $r \ge 1$, we say that ${\cal A}$ is an $r$-approximation algorithm for \CP if the approximation ratio of the feasible solution ${\cal A}(\Pre(D), P)$ with respect to $D, P$ satisfies $\rho_{\cal A}(D, P, {\cal A}(\Pre(D), P)) \le r$.

\begin{definition}
	\emph{
		A problem $\CP_1$ is \textit{pseudo-polylog-time L-reducible} to a problem $\CP_2$, denoted as $\mathcal{P}_1 \le^{\tt PsPL}_L \mathcal{P}_2$, if there is a pseudo-polylog-time reduction $\langle f_1(\cdot), f_2(\cdot), g(\cdot, \cdot), h(\cdot, \cdot)\rangle$ from $\CP_1$ to $\CP_2$ such that for all $D$ and $P \in \CP_1$ defined on $D$ it holds that:
		\begin{enumerate}[1.]
			\item $\mathsf{opt}_{\mathcal{P}_2}(f_1(D), g(f_2(D), P)) \le \alpha \cdot \mathsf{opt}_{\mathcal{P}_1}(D, P)$
			\item for any $y \in \mathsf{sol}_{\mathcal{P}_2}(f_1(D), g(f_2(D), P))$,
			\[|\mathsf{opt}_{\mathcal{P}_1}(D, P) - \tau_{\CP_1}(h(f_2(D), y))| \le \beta \cdot |\mathsf{opt}_{\mathcal{P}_2}(f_1(D), g(f_2(D), P)) - \tau_{\CP_2}(y)|.\]
		\end{enumerate}
	}
\end{definition}

\begin{theorem}\label{thm:ps-L}
	Given two problems $\CP_1$ and $\CP_2$, if $\CP_1 \le_L^\emph{\tt PsPL} \CP_2$ with parameter $\alpha$ and $\beta$
	and there is a pseudo-polylog-time $(1+\delta)$-approximation algorithm for $\CP_2$, then there is a pseudo-polylog-time $(1+\gamma)$-approximation algorithm for $\CP_1$, where $\gamma = \alpha\beta\cdot \delta$ if $\CP_1$ is a minimization problem and and $\gamma = \frac{\alpha\beta\delta}{1-\alpha\beta\delta}$ if $\CP_1$ is a maximization problem.
\end{theorem}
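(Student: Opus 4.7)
The plan is to follow the standard L-reduction approximation-preservation argument, lifted into the pseudo-polylog-time setting by reusing the preprocessing composition already exercised in Theorem~\ref{thm:ps-polylog-closed}. Concretely, let $\langle f_1(\cdot), f_2(\cdot), g(\cdot,\cdot), h(\cdot,\cdot)\rangle$ be the given pseudo-polylog-time L-reduction from $\CP_1$ to $\CP_2$, and let $\mathcal{A}_2$ be the pseudo-polylog-time $(1+\delta)$-approximation algorithm for $\CP_2$ with associated \PTIME preprocessing $\Pre_2(\cdot)$. I would construct an approximation algorithm $\mathcal{A}_1$ for $\CP_1$ whose preprocessing is $\Pre_1(D) = \llcorner |f_2(D)|\lrcorner\#f_2(D)\#\Pre_2(f_1(D))$, precisely as in the closure proofs above.

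Next I would argue the resource bounds. Since $f_1, f_2$ are \NC computable and $\Pre_2$ is in \PTIME, $\Pre_1$ is \PTIME. On an online instance $P$, $\mathcal{A}_1$ first locates the two $\#$ separators in $O(\log|\Pre_1(D)|)$ time via the length prefix, computes $g(f_2(D), P)$ in \PPL, feeds the pair $\langle \Pre_2(f_1(D)), g(f_2(D), P)\rangle$ to $\mathcal{A}_2$ to obtain a feasible solution $y$ for $\CP_2$, and finally returns $h(f_2(D), y)$. Because $g$ and $h$ are in \PPL and $\mathcal{A}_2$ runs in polylogarithmic time in $|f_1(D)| = O(|D|^{O(1)})$, the overall online time is polylogarithmic in $|D|$, so $\mathcal{A}_1$ is a pseudo-polylog-time algorithm for $\CP_1$.

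The ratio analysis is the textbook L-reduction calculation. Writing $D' = f_1(D)$, $P' = g(f_2(D),P)$ and using the two defining inequalities together with $\rho_{\mathcal{A}_2} \le 1+\delta$, I would chain
\[
|\mathsf{opt}_{\CP_1}(D,P) - \tau_{\CP_1}(h(f_2(D),y))| \;\le\; \beta\,|\mathsf{opt}_{\CP_2}(D',P') - \tau_{\CP_2}(y)| \;\le\; \alpha\beta\delta\,\mathsf{opt}_{\CP_1}(D,P),
\]
where the last step uses $|\mathsf{opt}_{\CP_2} - \tau_{\CP_2}(y)| \le \delta\cdot\mathsf{opt}_{\CP_2}(D',P')$ and property~1. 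For a minimization $\CP_1$ this directly yields $\tau_{\CP_1}(h(f_2(D),y)) \le (1+\alpha\beta\delta)\,\mathsf{opt}_{\CP_1}(D,P)$. For a maximization $\CP_1$ the same chain gives $\tau_{\CP_1}(h(f_2(D),y)) \ge (1-\alpha\beta\delta)\,\mathsf{opt}_{\CP_1}(D,P)$, and rearranging into a ratio produces $1/(1-\alpha\beta\delta) = 1 + \alpha\beta\delta/(1-\alpha\beta\delta)$.

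The main obstacle is less mathematical than organizational: one must be careful that the preprocessing of $\CP_2$ really can be triggered off-line on $f_1(D)$ (this is fine because $f_1$ depends only on $D$), and that the \PPL bound on $g$ and $h$ composes with $\mathcal{A}_2$'s runtime without accidentally inflating the online phase beyond polylog. The earlier composition argument from Theorem~\ref{thm:ps-polylog-closed} resolves exactly this issue, so I would cite that construction verbatim rather than re-derive it. A secondary subtlety is that for maximization one needs $\alpha\beta\delta < 1$ for the bound to be meaningful; I would note this hypothesis is implicit in the statement since otherwise the claimed ratio is vacuous.
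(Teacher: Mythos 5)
Your proposal is correct and follows essentially the same route as the paper's proof: it reuses the preprocessing composition from Theorem~\ref{thm:ps-polylog-closed} for the algorithmic construction, and the ratio analysis chains exactly the same three ingredients (property~2 of the L-reduction, the $(1+\delta)$ guarantee for $\CP_2$, then property~1) to obtain $1+\alpha\beta\delta$ for minimization and $1/(1-\alpha\beta\delta)$ for maximization. Your explicit remarks about applying $h$ to recover the $\CP_1$ solution and about needing $\alpha\beta\delta<1$ in the maximization case are slightly more careful than the paper's write-up, but the argument is the same.
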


\begin{proof}
	The algorithm for $\CP_1$ is constructed as stated in the proof of Theorem \ref{thm:ps-polylog-closed}.
	Then, if $\CP_1$ is a minimization problem, it holds that
	\begin{equation*}
		\begin{small}
			\begin{aligned}
				\frac{\tau_{\CP_1}(h(D, P, y))}{{\tt opt}_{\CP_1}(D, P)}
				&= \frac{{\tt opt}_{\CP_1}(D, P) + \tau_{\CP_1}(h(D, P, y)) - {\tt opt}_{\CP_1}(D, P)}{{\tt opt}_{\CP_1}(D, P)}\\
				& \le \frac{{\tt opt}_{\CP_1}(D, P) + \beta\cdot \left|\tau_{\CP_2}(y) - {\tt opt}_{\CP_2}(f(D), g(D, P))\right|}{{\tt opt}_{\CP_1}(D, P)}\\
				& \le 1 + \alpha\beta \cdot \left|\frac{\tau_{\CP_2}(y) - {\tt opt}_{\CP_2}(f(D), g(D, P)))}{{\tt opt}_{\CP_2}(f(D), g(D, P))}\right|
			\end{aligned}
		\end{small}
	\end{equation*}
	Thus we obtain a $(1 + \alpha\beta \cdot \delta)$-approximation algorithm for $\CP_1$.
	And, if $\CP_1$ is a maximization problem, it holds that
	\begin{equation*}
		\begin{small}
			\begin{aligned}
				\frac{\tau_{\CP_1}(h(D, P, y))}{{\tt opt}_{\CP_1}(D, P)}
				&= \frac{{\tt opt}_{\CP_1}(D, P) + \tau_{\CP_1}(h(D, P, y)) - {\tt opt}_{\CP_1}(D, P)}{{\tt opt}_{\CP_1}(D, P)}\\
				&\ge \frac{{\tt opt}_{\CP_1}(D, P) - \beta\cdot \left|{\tt opt}_{\CP_2}(f(D), g(D, P)) - \tau_{\CP_2}(y)\right|}{{\tt opt}_{\CP_1}(D, P)}\\
				&\ge 1 - \alpha\beta \cdot \left|\frac{{\tt opt}_{\CP_2}(f(D), g(D, P)) - \tau_{\CP_2}(y))}{{\tt opt}_{\CP_2}(f(D), g(D, P))}\right|
			\end{aligned}
		\end{small}
	\end{equation*}
	Thus the algorithm is a $(1 + \frac{\alpha\beta\delta}{1 - \alpha\beta\delta})$-approximation algorithm for $\CP_1$.
	\qed
\end{proof}

It is easy to extend the above definition in the context of pseudo-sublinear-time reduction.
Hence, the following theorem is derived.

\begin{theorem}\label{thm:sps-L}
	Given two problems $\CP_1$ and $\CP_2$, if $\CP_1 \le^{\tt PsT}_L \CP_2$ with parameter $\alpha$ and $\beta$
	and there is a pseudo-sublinear-time $(1+\delta)$-approximation algorithm for $\CP_2$, then there is a pseudo-sublinear-time $(1+\gamma)$-approximation algorithm for $\CP_1$, where $\gamma = \alpha\beta\cdot \delta$ if $\CP_1$ is a minimization problem and and $\gamma = \frac{\alpha\beta\delta}{1-\alpha\beta\delta}$ if $\CP_1$ is a maximization problem.
\end{theorem}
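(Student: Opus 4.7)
The plan is to mirror the argument of Theorem~\ref{thm:ps-L}, observing that the approximation-ratio computation depends only on the two defining inequalities of an L-reduction (the bound on $\mathsf{opt}_{\CP_2}$ and the bound on the gap between $\tau_{\CP_1}(h(\cdot))$ and $\mathsf{opt}_{\CP_1}$), not on the time class in which the reduction is computed. Thus once the constructed algorithm is shown to run in pseudo-sublinear time, the analytic part can be lifted verbatim.

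First I would define, by analogy with the pseudo-polylog-time case, what it means for $\CP_1 \le^{\tt PsT}_L \CP_2$: the underlying triple $\langle f_1, f_2, g\rangle$ together with a back-mapping $h(\cdot,\cdot)$ is a pseudo-sublinear-time reduction in the sense of Definition~\ref{def:pseudo-sublinear-time}, satisfying conditions (1) and (2) with constants $\alpha$ and $\beta$. Here $f_1, f_2$ are linear-size \NC computable and $g, h$ are \PsT computable. Next, given a pseudo-sublinear-time $(1+\delta)$-approximation algorithm $\mathcal{B}$ for $\CP_2$ with preprocessing $\Pre_2(\cdot)$, I would construct an algorithm $\mathcal{A}$ for $\CP_1$ as follows: on data part $D$, set $\Pre_1(D) = \llcorner|f_2(D)|\lrcorner\#f_2(D)\#\Pre_2(f_1(D))$ exactly as in the proof of Theorem~\ref{thm:ps-sub-closed}; on a query $P$ defined on $D$, first recover $f_2(D)$ from the preprocessed data, compute $g(f_2(D),P)$, run $\mathcal{B}$ on the pair $\langle\Pre_2(f_1(D)),g(f_2(D),P)\rangle$ to obtain a feasible solution $y$ for the reduced instance, and finally return $h(f_2(D), y)$.

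The time analysis is precisely the calculation from Theorem~\ref{thm:ps-sub-closed}: $\Pre_1$ is \PTIME because $\Pre_2$ is \PTIME and $f_1, f_2$ are \NC; the on-line phase takes $O(\log|f_2(D)|)$ for locating separators, $o(|f_2(D)|)$ for evaluating $g$, $o(|f_1(D)|)$ for running $\mathcal{B}$, and $o(|f_2(D)|)$ for applying $h$, all of which sum to $o(|D|)$ by the linear-size bounds on $f_1,f_2$. Hence $\mathcal{A}$ is a pseudo-sublinear-time algorithm for $\CP_1$. Correctness of the feasible-solution output follows from the L-reduction property that $h$ maps solutions of the reduced instance back to solutions of the original one.

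For the ratio bound, I would reuse the two display-style inequalities in the proof of Theorem~\ref{thm:ps-L} without change, since they only manipulate $\mathsf{opt}_{\CP_1}$, $\mathsf{opt}_{\CP_2}$, $\tau_{\CP_1}(h(\cdot))$, $\tau_{\CP_2}(y)$, $\alpha$ and $\beta$: plugging the $(1+\delta)$ guarantee on $y$ gives $\rho_{\mathcal{A}}(D,P,\mathcal{A}(\Pre_1(D),P)) \le 1+\alpha\beta\delta$ in the minimization case and $\rho_{\mathcal{A}}(D,P,\mathcal{A}(\Pre_1(D),P)) \le 1+\tfrac{\alpha\beta\delta}{1-\alpha\beta\delta}$ in the maximization case. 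The main obstacle, and really the only non-routine point, is to make sure the running-time composition still works when $h$ is used on the online side: this is why $h$ must be required to be \PsT computable (matching the restriction on $g$), so that its cost is absorbed into the $o(|D|)$ bound; once this is in place, the rest of the argument is a direct transplant of Theorem~\ref{thm:ps-L}.
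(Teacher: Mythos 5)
Your proposal is correct and follows essentially the same route as the paper: the paper derives Theorem~\ref{thm:sps-L} exactly by extending the L-reduction definition to the pseudo-sublinear setting, reusing the algorithm construction from Theorem~\ref{thm:ps-sub-closed} (with $h$ subject to the same \PsT restriction as $g$, as the paper stipulates), and transplanting the ratio inequalities from Theorem~\ref{thm:ps-L}, which depend only on the two L-reduction conditions and not on the time bounds.
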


	\section{Complete problems in {PPL}}\label{sec:ppl}

We have shown that $\PPL$ is closed under \DLOG reduction and defined $\PPL$-completeness in~\cite{DBLP:journals/tcs/GaoLML20}.
However, we did not manage to find the first natural $\PPL$-complete problem.
In this section, we give a negative answer to the existence of $\PPL$-complete problems.

\begin{lemma}\label{lma:PPL-close}
	{\rm \cite{DBLP:journals/tcs/GaoLML20}}
	For any two problems $\CP_1$ and $\CP_2$, if $\CP_2 \in \emph{\PPL}^i$, and there is a \emph{\DLOG} reduction from $\CP_1$ to $\CP_2$, then $\CP_1 \in \emph{\PPL}^{i + 1}$.
\end{lemma}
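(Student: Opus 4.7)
The plan is to simulate the polylog-time algorithm for $\CP_2$ on the reduced instance, paying only a single extra logarithmic factor for on-demand access to the bits of the reduced instance.

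First I would unpack the two hypotheses. The assumption $\CP_2 \in \PPL^i$ gives a \RATM $M_2$ that decides $\CP_2$ in $O(\log^i m)$ steps on inputs of length $m$. A \DLOG reduction from $\CP_1$ to $\CP_2$ furnishes a function $f$ with $x \in \CP_1 \Leftrightarrow f(x) \in \CP_2$, such that given $x$ and a bit index $j$ of length $O(\log |x|)$, the $j$-th bit of $f(x)$ is produced by a \RATM in $O(\log |x|)$ time. As a standard consequence, $|f(x)|$ is polynomially bounded in $|x|$, so $\log |f(x)| = O(\log |x|)$.

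Next I would construct a \RATM $M_1$ for $\CP_1$ by simulating $M_2$ on the ``virtual'' input $f(x)$. The crucial observation is that $M_2$ touches its input tape only through the random-access mechanism: it writes a binary index $j$ of length $O(\log |f(x)|) = O(\log |x|)$ onto its input index tape and then enters the random-access state to fetch the bit at position $j$. Every time $M_1$ detects such a random-access event, it invokes the \DLOG bit-extractor of the reduction on the pair $(x, j)$ stored on its own index tapes to recover the demanded bit, and then hands that bit back to the simulated $M_2$. All other steps of $M_2$, which only touch its work tapes, are simulated in $O(1)$ overhead per step.

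For the time analysis, $M_2$ performs at most $O(\log^i |f(x)|) = O(\log^i |x|)$ steps; each step costs $M_1$ either $O(1)$ additional work (for non-input steps) or $O(\log |x|)$ additional work (when answering a random-access query via the bit-extractor). Multiplying, $M_1$ runs in $O(\log^{i+1} |x|)$ time, witnessing $\CP_1 \in \PPL^{i+1}$.

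The only delicate point I anticipate is bookkeeping around the random-access interface: I must verify that the index $j$ that $M_2$ places on its index tape really has length $O(\log |x|)$ (which follows because $j \le |f(x)|$ and $|f(x)|$ is polynomial in $|x|$), and that transferring this index from $M_2$'s simulated index tape to the input tape of the bit-extractor can be done within the per-step $O(\log |x|)$ budget rather than causing a cumulative blow-up. Once this uniform bookkeeping is in place, the remaining work is a routine step-by-step simulation argument.
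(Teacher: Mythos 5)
Your proof is correct and is exactly the standard argument for this lemma: the paper itself does not reprove it (it is imported from \cite{DBLP:journals/tcs/GaoLML20} with a citation), and the on-demand simulation --- running $M_2$ on the virtual input $f(x)$ and paying $O(\log|x|)$ per random-access query via the bit-extractor, for a total of $O(\log^{i}|x|)\cdot O(\log|x|) = O(\log^{i+1}|x|)$ --- is precisely how the cited work establishes the one-extra-log-factor bound. No gaps beyond the bookkeeping you already flagged.
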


\begin{theorem}\label{thm:hierarchy}
	{\rm \cite{DBLP:journals/tcs/GaoLML20}}
	For any $i \in \mathbb{N}$, $\emph{\PPL}^i \subsetneq \emph{\PPL}^{i+1}$.
\end{theorem}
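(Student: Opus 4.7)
The plan is a standard time-hierarchy diagonalization argument, carefully adapted to the \RATM model so that the simulation overhead fits into the gap between $\log^i n$ and $\log^{i+1} n$. First I would fix a reasonable binary encoding of \RATM machines $M$ as strings $\langle M\rangle$, and construct a universal \RATM $U$ that, on an input of the form $\langle M\rangle\#x$ together with a step budget $t$ given in binary, simulates $M$ on $x$ for at most $t$ steps, accepting iff the simulation accepts within the budget. The transition table of $M$ is a fixed part of $\langle M\rangle$ and is consulted via random access; the simulated tape contents and index-tape contents of $M$ are laid out on work tapes of $U$ and are also accessed by random access; the step counter is maintained on an index tape. With this setup, each simulated step of $M$ costs $U$ only $O(\log t)$ time, so the total simulation cost is $O(t\log t)$.

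Next I would define the diagonal language
\[ L_i \;=\; \{\langle M\rangle\#1^k : M \text{ does not accept } \langle M\rangle\#1^k \text{ within } \lfloor \log^{i}(n)\,\log\log(n)\rfloor \text{ steps}\}, \]
where $n$ denotes the total input length. To place $L_i$ in $\PPL^{i+1}$, on input $w$ I would locate the separator $\#$ by binary search via random access in $O(\log|w|)$ time, compute the budget $t=\lfloor\log^i(|w|)\log\log|w|\rfloor$, and run $U$ on the parsed input for $t$ steps, accepting iff $U$ rejects. The total running time is dominated by the simulation, giving $O(t\log t)=O(\log^{i+1}n)$.

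To show $L_i\notin \PPL^i$, I would argue by contradiction: suppose some \RATM $N$ decides $L_i$ within $c\log^i n$ steps. Because $n=|\langle N\rangle|+k+1$ and $|\langle N\rangle|$ is constant in $k$, for all sufficiently large $k$ the budget $\lfloor\log^i(n)\log\log(n)\rfloor$ strictly exceeds $c\log^i n$, so $N$ on input $\langle N\rangle\#1^k$ halts within the budget. Then by the definition of $L_i$, $\langle N\rangle\#1^k\in L_i$ iff $N$ rejects $\langle N\rangle\#1^k$, the standard diagonal contradiction.

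The main obstacle is the quantitative bound on universal simulation in the \RATM model: one must verify that a \RATM can simulate another \RATM with only a $\log t$ overhead per step, tight enough to be absorbed by replacing $\log^i n$ with $\log^i n\cdot\log\log n$ in the budget. The random-access mechanism is exactly what makes this possible, since both the transition table and any simulated cell are fetched in constant time and only the step-counter increments contribute a logarithmic factor. Once this simulation lemma is in hand, the calibration of the budget strictly between $\log^i n$ and $\log^{i+1}n$ and the diagonal step are routine, and induction on $i$ is not needed since the argument works uniformly for every $i\in\mathbb{N}$.
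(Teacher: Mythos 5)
First, a point of reference: the paper does not prove this theorem at all --- it is imported verbatim from the prior work \cite{DBLP:journals/tcs/GaoLML20} --- so your attempt can only be judged against the standard hierarchy argument it must rest on. Your overall strategy (a universal \RATM plus diagonalization over padded inputs $\langle M\rangle\#1^k$) is the right one, and the diagonalization half of your argument is sound: for a \emph{fixed} machine $N$, every constant depending on $N$ is absorbed once $k$ is large, and the flip-the-answer contradiction goes through.

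The genuine gap is in the membership claim $L_i \in \PPL^{i+1}$, and it comes from the assertion that ``each simulated step of $M$ costs $U$ only $O(\log t)$ time.'' That bound cannot be uniform in $M$: one step of a $k_M$-tape \RATM touches $k_M$ cells, its state occupies $\Omega(\log |Q_M|)$ bits, and a transition lookup must read an entry of length growing with $|\langle M\rangle|$, so any step-by-step simulation costs at least $\Omega\bigl(k_M + \log|Q_M|\bigr)$ per simulated step --- a quantity that grows with $|\langle M\rangle|$. In your language $L_i$ the machine description is \emph{part of the input}, and nothing prevents $|\langle M\rangle|$ from being a constant fraction of $n$ (take $k$ small); on such inputs the decider you describe runs for time at least $t$ times a factor polylogarithmic (or worse) in $n$, which already exceeds $\log^{i+1} n$, so your argument does not place $L_i$ in $\PPL^{i+1}$ --- indeed it is unclear that $L_i$, as you defined it (by the \emph{simulated machine's} step count), lies in \PPL at all. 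The $M$-dependent overhead may be treated as a constant only in the diagonalization step (fixed $N$, $k \to \infty$), not in the membership step, which quantifies over all inputs. The standard repair is to define the diagonal language operationally, Sipser/Arora--Barak style: the decider clocks its \emph{own} running time, aborting and rejecting after $\lfloor\log^{i+1} n\rfloor$ of its own steps, and $L_i$ is by definition the language this clocked machine accepts; membership in $\PPL^{i+1}$ then holds by construction, and the contradiction still works because for fixed $N$ and large $k$ the simulation finishes within the clock. (Alternatively, cap $|\langle M\rangle|$ by a slowly growing function of $n$ and declare all other inputs non-members.) A further small flaw of the same kind: binary search cannot locate the separator $\#$ in $O(\log|w|)$ time in an arbitrary string; instead prefix the input with $\llcorner|\langle M\rangle|\lrcorner$, exactly as this paper does in its own reduction constructions, so that parsing takes $O(\log n)$ time.
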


\begin{theorem}
	There is no {\rm\PPL}-complete problem under {\rm\DLOG} reduction.
\end{theorem}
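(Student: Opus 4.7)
The plan is a textbook diagonalization against the time hierarchy, using the two quoted facts as the only real ingredients. Suppose, for contradiction, that some problem $\CP$ is \PPL-complete under \DLOG reduction. Since $\CP \in \PPL$, by the definition of $\PPL = \bigcup_i \PPL^i$ there exists a fixed $i \in \mathbb{N}$ with $\CP \in \PPL^i$. The strategy is to show that completeness of $\CP$ at level $i$ would collapse the entire hierarchy above level $i+1$, contradicting Theorem~\ref{thm:hierarchy}.

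First, I would pick a witness problem at a strictly higher level. By Theorem~\ref{thm:hierarchy} applied with exponent $i+1$, the inclusion $\PPL^{i+1} \subsetneq \PPL^{i+2}$ is strict, so there is some problem $\CP' \in \PPL^{i+2} \setminus \PPL^{i+1}$. In particular $\CP' \in \PPL$, and by the assumed \PPL-completeness of $\CP$ under \DLOG reduction, we have $\CP' \le_{\DLOG} \CP$. Now apply Lemma~\ref{lma:PPL-close} with $\CP_2 = \CP \in \PPL^i$ and $\CP_1 = \CP'$: it yields $\CP' \in \PPL^{i+1}$, contradicting the choice of $\CP'$.

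The only subtlety I anticipate is making sure the quantifier on $i$ is handled correctly: the reduction from $\CP'$ to $\CP$ is fixed once $\CP$ is fixed, and its cost gets absorbed by a single extra logarithmic factor via Lemma~\ref{lma:PPL-close}, so the jump is exactly one level regardless of which $\CP'$ we chose. This is exactly why selecting a witness two levels above $\CP$ (in $\PPL^{i+2}\setminus\PPL^{i+1}$) gives the contradiction, rather than picking one only one level above. No new machinery is needed and no further construction is required; the argument is essentially a two-line corollary of the closure lemma and the strict hierarchy, so I expect the writeup to be short and the main care to go into stating the level bookkeeping precisely.
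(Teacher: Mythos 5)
Your proposal is correct and is essentially identical to the paper's own proof: the paper also fixes $\CP \in \PPL^c$, takes a witness in $\PPL^{c+2} \setminus \PPL^{c+1}$ guaranteed by Theorem~\ref{thm:hierarchy}, and applies Lemma~\ref{lma:PPL-close} to the assumed \DLOG reduction to land the witness in $\PPL^{c+1}$, a contradiction. The only difference is notational (the paper writes $k = c+1$ and calls the witness $\CP_{k+1}$), so there is nothing to add.
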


\begin{proof}
	For contradiction, suppose there is a \PPL-complete problem $\CP$ under \DLOG reduction.
	Hence, there is a constant $c \ge 1$ such that $\CP \in \PPL^c$.
	For Theorem~\ref{thm:hierarchy}, for any $i \in \mathbb{N}$, there is a problem $\CP_{i + 1}$ which belongs to $\PPL^{i + 1}$ but not to $\PPL^i$.
	Let $k = c + 1$.
	Since $\CP$ is \PPL-complete, there is a \DLOG reduction from $\CP_{k + 1}$ to $\CP$.
	From Lemma \ref{lma:PPL-close}, it is derived that $\CP_{k + 1} \in \PPL^{c + 1} = \PPL^k$.
	This contradicts to the fact that $\CP_{k + 1} \in \PPL^{k + 1}\setminus \PPL^k$.
	\qed
\end{proof}

Notice that every un-trivial problems in $\PPL^1$ is $\PPL^1$-complete under \DLOG reduction.
It is still meaningful to find complete problems of each level in \PPL hierarchy.
	\section{Conclusion}\label{sec:con}

This paper studies the pseudo-sublinear-time reductions specialized for problems in big data computing.
Two concrete reductions $\le^{\tt PsT}_m$ and $\le^{\tt PsPL}_m$ are proposed.
It is proved that the complexity classes \Pproblem and \PsT are closed under $\le^{\tt PsT}_m$, and the complexity class \PsPL is closed under $\le^{\tt PsPL}_m$.
These provide powerful tools not only for designing pseudo-sublinear-time algorithms for some problems, but also for proving certain problems are infeasible in sublinear time after a \PTIME preprocessing.
More concretely, based on the fact that circuit value problem belongs to $\Pproblem\setminus \PsT$, the algebraic equation root problem is proved not in \PsT by establish a $\le^{\tt PsT}_m$ reduction from CVP to it.
Since CVP is \Pproblem-complete under \NC reduction, it may turn out to be an excellent starting point for many results, yielding pseudo-sublinear-time reductions for fundamental problems and giving unconditional pseudo-sublinear intractable results.
Then to separate \PsT and \PsPL, the \PsT-completeness is defined under $\le^{\tt PsPL}_m$.
We give out a range of possible \PsT-complete problems by proving that all of them are also \Pproblem-complete under \NC reduction.
We also extend the L-reduction to pseudo-sublinear time and prove it linearly preserves approximation ratio for pseudo-sublinear-time approximation algorithms.
Finally, we give an negative answer to the existence of \PPL-complete problems under \DLOG reduction.
This may guide the following efforts focusing on finding complete problems for each level of \PPL hierarchy.

	\section*{Acknowledgment}
	This work was supported by the National Natural Science Foundation of China under grants 61732003, 61832003, 61972110 and U1811461.
	
	\bibliographystyle{plain}
	\bibliography{COCOA_Reduction_2021_ref}

\begin{thebibliography}{10}

\bibitem{ssd.rank}
Ssd ranking: The fastest solid state drives.
\newblock
  \url{https://www.gamingpcbuilder.com/ssd-ranking-the-fastest-solid-state-drives/}.
\newblock Accessed August 4, 2021.

\bibitem{DBLP:conf/soda/BaswanaCC016}
Surender Baswana, Shreejit~Ray Chaudhury, Keerti Choudhary, and Shahbaz Khan.
\newblock Dynamic {DFS} in undirected graphs: breaking the o(\emph{m}) barrier.
\newblock In Robert Krauthgamer, editor, {\em Proceedings of the Twenty-Seventh
  Annual {ACM-SIAM} Symposium on Discrete Algorithms, {SODA} 2016, Arlington,
  VA, USA, January 10-12, 2016}, pages 730--739. {SIAM}, 2016.

\bibitem{bringmann2019fine}
Karl Bringmann.
\newblock Fine-grained complexity theory (tutorial).
\newblock In {\em 36th International Symposium on Theoretical Aspects of
  Computer Science (STACS 2019)}. Schloss Dagstuhl-Leibniz-Zentrum fuer
  Informatik, 2019.

\bibitem{DBLP:journals/iandc/CadoliDLS02}
Marco Cadoli, Francesco~M. Donini, Paolo Liberatore, and Marco Schaerf.
\newblock Preprocessing of intractable problems.
\newblock {\em Inf. Comput.}, 176(2):89--120, 2002.

\bibitem{DBLP:conf/swat/ChenDWZZ18}
Lijie Chen, Ran Duan, Ruosong Wang, Hanrui Zhang, and Tianyi Zhang.
\newblock An improved algorithm for incremental {DFS} tree in undirected
  graphs.
\newblock In David Eppstein, editor, {\em 16th Scandinavian Symposium and
  Workshops on Algorithm Theory, {SWAT} 2018, June 18-20, 2018, Malm{\"{o}},
  Sweden}, volume 101 of {\em LIPIcs}, pages 16:1--16:12. Schloss Dagstuhl -
  Leibniz-Zentrum f{\"{u}}r Informatik, 2018.

\bibitem{cook1985taxonomy}
Stephen~A Cook.
\newblock A taxonomy of problems with fast parallel algorithms.
\newblock {\em Information and control}, 64(1-3):2--22, 1985.

\bibitem{DBLP:conf/coco/Crescenzi97}
Pierluigi Crescenzi.
\newblock A short guide to approximation preserving reductions.
\newblock In {\em Proceedings of the Twelfth Annual {IEEE} Conference on
  Computational Complexity, Ulm, Germany, June 24-27, 1997}, pages 262--273.
  {IEEE} Computer Society, 1997.

\bibitem{DBLP:journals/pvldb/FanGN13}
Wenfei Fan, Floris Geerts, and Frank Neven.
\newblock Making queries tractable on big data with preprocessing.
\newblock {\em Proc. {VLDB} Endow.}, 6(9):685--696, 2013.

\bibitem{DBLP:journals/tcs/GaoLML20}
Xiangyu Gao, Jianzhong Li, Dongjing Miao, and Xianmin Liu.
\newblock Recognizing the tractability in big data computing.
\newblock {\em Theor. Comput. Sci.}, 838:195--207, 2020.

\bibitem{HOLDSWORTH200243}
B.~HOLDSWORTH and R.C. WOODS.
\newblock 3 - karnaugh maps and function simplification.
\newblock In B.~HOLDSWORTH and R.C. WOODS, editors, {\em Digital Logic Design
  (Fourth Edition)}, pages 43--80. Newnes, Oxford, fourth edition edition,
  2002.

\bibitem{DBLP:books/daglib/0066920}
Hartley~Rogers Jr.
\newblock {\em Theory of recursive functions and effective computability
  (Reprint from 1967)}.
\newblock {MIT} Press, 1987.

\bibitem{lij2014dasfaa}
Jianzhogn Li.
\newblock Complexity, algorithms and quality of big data intensive computing.
\newblock In {\em Database Systems for Advanced Applications - 19th
  International Conference, {DASFAA} 2014, Bali, Indonesia}. Springer, 2014.

\bibitem{DBLP:conf/swat/NekrichN12}
Yakov Nekrich and Gonzalo Navarro.
\newblock Sorted range reporting.
\newblock In Fedor~V. Fomin and Petteri Kaski, editors, {\em Algorithm Theory -
  {SWAT} 2012 - 13th Scandinavian Symposium and Workshops, Helsinki, Finland,
  July 4-6, 2012. Proceedings}, volume 7357 of {\em Lecture Notes in Computer
  Science}, pages 271--282. Springer, 2012.

\bibitem{DBLP:books/daglib/0072413}
Christos~H. Papadimitriou.
\newblock {\em Computational complexity}.
\newblock Addison-Wesley, 1994.

\bibitem{DBLP:journals/ipl/Reif85}
John~H. Reif.
\newblock Depth-first search is inherently sequential.
\newblock {\em Inf. Process. Lett.}, 20(5):229--234, 1985.

\bibitem{smith1986parallel}
Justin~R Smith.
\newblock Parallel algorithms for depth-first searches i. planar graphs.
\newblock {\em SIAM Journal on Computing}, 15(3):814--830, 1986.

\bibitem{DBLP:conf/iwpec/Williams15}
Virginia~Vassilevska Williams.
\newblock Hardness of easy problems: Basing hardness on popular conjectures
  such as the strong exponential time hypothesis (invited talk).
\newblock In Thore Husfeldt and Iyad~A. Kanj, editors, {\em 10th International
  Symposium on Parameterized and Exact Computation, {IPEC} 2015, September
  16-18, 2015, Patras, Greece}, volume~43 of {\em LIPIcs}, pages 17--29.
  Schloss Dagstuhl - Leibniz-Zentrum f{\"{u}}r Informatik, 2015.

\bibitem{williams2018some}
Virginia~Vassilevska Williams.
\newblock On some fine-grained questions in algorithms and complexity.
\newblock In {\em Proceedings of the International Congress of Mathematicians:
  Rio de Janeiro 2018}, pages 3447--3487. World Scientific, 2018.

\bibitem{DBLP:journals/kais/YangWC18}
Jiannan Yang, Hanpin Wang, and Yongzhi Cao.
\newblock Tractable queries on big data via preprocessing with logarithmic-size
  output.
\newblock {\em Knowl. Inf. Syst.}, 56(1):141--163, 2018.

\end{thebibliography}
	
\end{document}